\newtheorem{thm}{Theorem}
\newtheorem{lem}{Lemma}
\newtheorem{cor}{Corollary}
\newtheorem{de}{Definition}
\begin{document}

\title{For the 9th conference of Japanese Society for Artificial Intelligence,
 Special Interest Group on Data Mining and Statistical Mathematics
     \\     (JSAI SIG-DMSM) in Kyoto 2009, March 3,4.}

\etitle{Graph polynomials and approximation of partition functions with
Loopy Belief Propagation} 


\author{Yusuke Watanabe%
	\thanks{Address: The Institute of Statistical Mathematics,
4-6-7, Minami-Azabu, Minato-Ku, Tokyo, 106-8569
E-mail:watay@ism.ac.jp } , \quad
	Kenji Fukumizu${}^*$
	}

\affiliation{%
}

\abstract{
The Bethe approximation, or loopy belief propagation
algorithm, is a successful method for approximating
partition functions of probabilistic models associated with graphs.

Chertkov and Chernyak derived an interesting formula
called ``Loop Series Expansion'',
which is an expansion of the partition function.
The main term of the series is the Bethe approximation while the other terms are 
labeled by subgraphs called generalized loops.

In our recent paper, we derive the loop series expansion in 
form of a polynomial with coefficients positive integers, and extend the
result to the expansion of marginals.
In this paper, we give more clear derivation of the results and discuss 
the properties of newly introduced polynomials.
}
\maketitle
\thispagestyle{empty}


\section{Introduction}
A Markov random field (MRF) associated with a graph is given by a joint
probability distribution over a set of variables.
In the associated graph,
the nodes represent variables and the edges represent probabilistic
dependence between variables. 
The joint distribution is often given in an unnormalized form,
and the normalization factor of a MRF is called a partition function.

Computation of the partition function and the mar-ginal distributions 
of a MRF with discrete variables is in general computationally
intractable for a large number of variables, and some approximation method is required.
Among many approximation methods, the Bethe approximation 
has attracted renewed interest of computer scientists;
it is equivalent to Loopy Belief Propagation
(LBP) algorithm \cite{empiricalstudy,GBP}, which has been successfully used for 
many applications such as error correcting codes, 
inference on graphs, image processing, and so on
\cite{LDPC,Turbo,LowLevelVision}.
If the associated graph is a tree,
the algorithm computes the exact value, not an approximation \cite{Pearl}.

%
The performance of this approximation is surprisingly well for many
applications even if the graph has many cycles. 
If the graph has one cycle,
the behavior of the algorithm is well understood,
and maximum marginal assignment of the approximation is known to be
exact \cite{1loop}.
On the other hand, if the graph has many cycles, little analysis have
done in the connection with the topological structure of the underlying graph.
Theoretical analysis of the approximation is important both for improving the
algorithm and extending it to wide range of applications.

Chertkov and Chernyak \cite{LoopPRE,Loop} give a
formula called loop series expansion, which expresses the partition
function in terms of a finite series.
The first term is the Bethe approximation, and the others are labeled by
so-called generalized loops.
The Bethe approximation can be corrected with this formula.
This expansion highlights the connection between the accuracy of the
approximation and the topology of the graph.

In our recent paper \cite{watanabe}, we derive the formula in terms
of message passing scheme with diagrams based on the method called propagation
diagrams.
We also showed that the true marginals can be also expanded around
the approximated marginals. 

In this paper we give a simple and easy derivation of the expansion of
the partition function and the marginal distributions. 
Similar but a different approach is found in \cite{Sudderth}.
We also discuss the properties of the bivariate polynomial which is
introduced in \cite{watanabe}.
Since this polynomial represents the ratio of the true partition function
and the Bethe approximated partition function, investigation of it is
important for understanding the relation between the graph topology and
the approximation performance.

\section{Bethe approximation and LBP algorithm}
In this section, we review the definitions and notations of Markov Random
Field (MRF) and loopy belief propagation algorithm.
\subsection{Pairwise Markov random field}
We introduce a probabilistic model considered in this paper,
MRF of binary states with pairwise interactions.
Let $G:=(V,E)$ be a connected undirected graph, where
$V=\{1,\ldots,N \}$ is a set of nodes and 
$E\subset \{(i,j); 1\leq i < j \leq N \}$
is a set of undirected edges.
We abbreviate undirected edges $(i,j)$ to $ij=ji$.
Each node $i \in V$ is associated with a binary space $\chi_i=\{\pm 1\}$.
We make a set of directed edges from $E$: $\vec{E}=\{(i,j),(j,i);(i,j)\in E
\}$.
The neighbors of $i$ is denoted by
$N(i) \subset V$, and $d_i= |N(i)|$ is called the degree of $i$.
A joint probability distribution on the graph $G$ is given by the form:
\begin{eqnarray}
p(x)=\frac{1}{Z} \prod_{ij\in E} \psi_{ij}(x_i,x_j) 
\prod_{i \in V} \phi_{i}(x_i), \label{MRF}
\end{eqnarray}
where $\psi_{ij}(x_i,x_j):\chi_i \times \chi_j \rightarrow
\mathbb{R}_{> 0}$ and  
$\phi_i : \chi_i \rightarrow \mathbb{R}_{> 0}$ are positive
functions called compatibility functions.
The normalization factor $Z$ is called the partition function.
A set of random variables which has a probability distribution in the
form of (\ref{MRF}) is called a Markov random
field (MRF) or an undirected graphical model on the graph $G$.

Without loss of generality, univariate compatibility functions $ \phi_{i}$ can 
be neglected because they can be included in bivariate compatibility functions $\psi_{ij}$.
This operation does not affect the Bethe approximation and the LBP algorithm
given below; we assume it as per the following.

\subsection{Loopy belief propagation algorithm}
The LBP algorithm computes the Bethe approximation of
the  partition function and the marginal distribution of each node
with the message passing method \cite{Pearl,empiricalstudy,GBP}.
This algorithm is summarized as follows.
\begin{enumerate}
\item Initialization: \\
For all $(j,i) \in \vec{E}$, the message from $i$ to $j$ is a vector
 $m_{(j,i)}^{0} \in \large{\mathbb{R}}^{2} $. Initialize as
\begin{eqnarray}
m_{(j,i)}^{0}(x_j)=1  \quad \forall x_j \in \chi_j .
\end{eqnarray}

\item Message Passing: \label{MessagePassing}\\
For each $t=0,1,\ldots$, update the messages by
\begin{equation}
m^{t+1}_{(j,i)}(x_j)=\omega \sum_{x_i \in \chi_i} 
\hspace{-1mm}\psi_{ji}(x_j,x_i)
\hspace{-2mm}
\prod_{k \in N(i) \backslash  \{j\} } 
\hspace{-4mm}
m^t_{(i,k)}(x_i), \label{BPupdate}
\end{equation}
until it converges. Finally we obtain
$\{m^{*}_{(j,i)}\}$.

\item 
Approximated marginals and the partition function are computed
by the following formulas:
\begin{equation}
\hspace{-23mm}
b_i(x_i):= \omega \prod_{j \in N(i)}m^{*}_{(i,j)}(x_i), \label{1marginal}
\end{equation}
\vspace{-6mm}
\begin{multline}
b_{ji}(x_j,x_i):= 
\omega  \psi_{ji}(x_j,x_i)  
\hspace{-2mm}
\prod_{k \in
 N(j)\backslash \{i\}} 
\hspace{-2mm} m^*_{(j,k)}(x_j) \\
\prod_{k' \in N(i) \backslash \{j\}} 
\hspace{-2mm} m^*_{(i,k')}(x_i) ,\label{2marginal}
\end{multline}
\vspace{-4mm}
\begin{multline}
\log Z_B := \sum_{ji \in {E}}\sum_{x_j, x_i}
 b_{ji}(x_j,x_i)\log\psi_{ji}(x_j,x_i)   \\
 - \sum_{ji \in {E}}\sum_{x_j x_i} b_{ji}(x_j,x_i)\log
 b_{ji}(x_j,x_i) \nonumber  \\
+ \sum_{i \in V}(d_i-1)\sum_{x_i}b_i(x_i)\log b_i(x_i), \label{Bethe}
\end{multline}
where $\omega$ are appropriate normalization constants,
$b_i$ are called beliefs, and $Z_B$ is called the Bethe approximation of the partition function.
\end{enumerate}
In step 2, there is ambiguity as to the order of
updating the messages. 
We do not specify the order, because the fixed points of LBP algorithm
do not depend on its choice.

We normalize as
$\sum_{x_i,x_j}b_{ji}(x_j,x_i)=\sum_{x_i}b_{i}(x_i)=1$,
so that 
the relation
$\sum_{x_i}b_{ji}(x_j,x_i)=b_j(x_j)$ is always satisfied for all $ij \in
E$.


\section{Derivation of the Loop Series Expansion}
In this section, we prove the loop series expansion formula of the partition
function and marginals.

\subsection{Expansion of partition functions}
First, we prove the following identity which plays a key role in our
derivation.
We define a set of polynomials $\{f_n(x)\}_{n=0}^{\infty}$ 
inductively by the relations $f_0(x)=1,f_1(x)=0$ and $f_{n+1}(x)=x
f_n(x) + f_{n-1}(x)$.
This polynomials are transformations of the Chebyshev polynomials of the second
kind.
\begin{thm}
Let $\{\xi_{i} \}_{i \in V}$ and $\{\beta_{ij} \}_{ij \in E}$ be sets of
 free variables associated to nodes and edges.  
Then,
\begin{equation}
\begin{split}
\sum_{x_1,\ldots,x_N=\pm 1}
\prod_{ij \in E}
(1+x_{i}x_{j}\beta_{ij}\xi_{i}^{-x_i}\xi_{j}^{-x_j})
\prod_{i \in V}
\frac{\xi_{i}^{x_i}}{\xi_{i}+\xi_{i}^{-1}} \\ 
=
\sum_{s \subset E}
\prod_{ij \in s} \beta_{ij}
\prod_{i \in V} f_{d_i(s)}(\xi_{i}-\xi_{i}^{-1}), 
\end{split}\label{thm1}
\end{equation}
where $d_{i}(s)$ is the degree of node $i$ in the subgraph induced by 
an edge set $s$.
\end{thm}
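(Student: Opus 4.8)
The plan is to expand the product over edges on the left-hand side and exchange the order of summation, so that the sum over spin configurations $x_1,\dots,x_N$ can be carried out node by node. Expanding $\prod_{ij\in E}(1+x_ix_j\beta_{ij}\xi_i^{-x_i}\xi_j^{-x_j})$ gives $\sum_{s\subset E}\prod_{ij\in s} x_ix_j\beta_{ij}\xi_i^{-x_i}\xi_j^{-x_j}$; grouping the factors attached to each node, the exponent of $x_i$ becomes $d_i(s)$ and the power of $\xi_i$ coming from the edge factors is $\xi_i^{-d_i(s)x_i}$, which combines with the vertex weight $\xi_i^{x_i}/(\xi_i+\xi_i^{-1})$. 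Thus the left-hand side equals
\begin{equation*}
\sum_{s\subset E}\Bigl(\prod_{ij\in s}\beta_{ij}\Bigr)
\prod_{i\in V}\frac{1}{\xi_i+\xi_i^{-1}}
\sum_{x_i=\pm 1} x_i^{\,d_i(s)}\,\xi_i^{(1-d_i(s))x_i}.
\end{equation*}
So it suffices to prove the single-node identity
\begin{equation*}
\frac{1}{\xi+\xi^{-1}}\sum_{x=\pm1} x^{\,n}\,\xi^{(1-n)x}
= f_n(\xi-\xi^{-1})
\end{equation*}
for every integer $n\ge 0$, with $n=d_i(s)$.

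The main work is this one-variable identity, which I would prove by induction on $n$ using the recurrence $f_{n+1}(x)=xf_n(x)+f_{n-1}(x)$. Write $g_n(\xi):=\frac{1}{\xi+\xi^{-1}}\bigl(\xi^{1-n}+(-1)^n\xi^{-(1-n)}\bigr)=\frac{1}{\xi+\xi^{-1}}\bigl(\xi^{1-n}+(-1)^n\xi^{n-1}\bigr)$. For the base cases, $g_0(\xi)=\frac{\xi+\xi^{-1}}{\xi+\xi^{-1}}=1=f_0$, and $g_1(\xi)=\frac{1-1}{\xi+\xi^{-1}}=0=f_1$. For the inductive step I must check $g_{n+1}(\xi)=(\xi-\xi^{-1})\,g_n(\xi)+g_{n-1}(\xi)$; clearing the common denominator $\xi+\xi^{-1}$, this reduces to the elementary identity
\begin{equation*}
\xi^{-n}+(-1)^{n+1}\xi^{n}
=(\xi-\xi^{-1})\bigl(\xi^{1-n}+(-1)^n\xi^{n-1}\bigr)
+\xi^{2-n}+(-1)^{n-1}\xi^{n-2},
\end{equation*}
which follows by expanding the right-hand side and cancelling terms (the cross terms $(-1)^n\xi^{n}$ and $(-1)^n\xi^{n-2}$ from the first product combine with the last two terms). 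This is a routine Laurent-polynomial computation.

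The one genuine subtlety to get right is the bookkeeping of exponents when passing from the edge product to the per-node factors: each edge $ij\in s$ contributes $\xi_i^{-x_i}$ and $\xi_j^{-x_j}$, and a node $i$ of degree $d_i(s)$ collects exactly $d_i(s)$ such factors, giving $\xi_i^{-d_i(s)x_i}$, which together with $\xi_i^{x_i}$ yields the exponent $(1-d_i(s))x_i$ used above; likewise $\prod_{ij\in s}x_ix_j = \prod_{i\in V} x_i^{d_i(s)}$. Once this is set up correctly, the sum over configurations factorizes completely over $V$ because for fixed $s$ the summand is a product of functions each depending on a single $x_i$, and the theorem follows by applying the single-node identity at each node and reassembling the product $\prod_{i\in V} f_{d_i(s)}(\xi_i-\xi_i^{-1})$. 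I do not expect any real obstacle beyond careful signs; the Chebyshev connection mentioned in the text is exactly the reason the closed form $f_n(\xi-\xi^{-1})$ appears.
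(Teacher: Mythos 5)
Your overall strategy is exactly the paper's: expand the edge product into a sum over $s\subset E$, factorize the configuration sum over nodes, and evaluate a single-node sum in closed form. However, the single-node identity you reduce everything to is false, and the inductive step you sketch for it does not go through. Writing $g_n(\xi)=\frac{1}{\xi+\xi^{-1}}\sum_{x=\pm1}x^n\xi^{(1-n)x}=\frac{\xi^{1-n}+(-1)^n\xi^{n-1}}{\xi+\xi^{-1}}$, one finds $g_n(\xi)=(-1)^n f_n(\xi-\xi^{-1})$, not $f_n(\xi-\xi^{-1})$: already for $n=3$, $g_3(\xi)=\frac{\xi^{-2}-\xi^{2}}{\xi+\xi^{-1}}=-(\xi-\xi^{-1})$ while $f_3(\xi-\xi^{-1})=\xi-\xi^{-1}$. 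Correspondingly, your $g_n$ satisfies $g_{n+1}=-(\xi-\xi^{-1})g_n+g_{n-1}$, so the ``elementary identity'' you display (with a $+$ sign) is not an identity; expanding its right-hand side leaves an uncancelled $2\xi^{2-n}$ term. The base cases $n=0,1,2$ happen to agree, which is why the error is easy to miss, but had you actually carried out the expansion you would have hit the sign discrepancy.

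The theorem is nevertheless true, and your argument is rescued by an observation you never make: since $\sum_{i\in V}d_i(s)=2|s|$ is even, $\prod_{i\in V}(-1)^{d_i(s)}=1$, so $\prod_{i}g_{d_i(s)}(\xi_i)=\prod_i f_{d_i(s)}(\xi_i-\xi_i^{-1})$ even though the factors differ individually. The paper sidesteps the issue by inserting this harmless global sign \emph{before} factorizing, writing the per-node sum as $\sum_{x_i=\pm1}(-x_i\xi_i^{-x_i})^{d_i(s)}\frac{\xi_i^{x_i}}{\xi_i+\xi_i^{-1}}$, which then equals $f_{d_i(s)}(\xi_i-\xi_i^{-1})$ exactly via the closed form $f_n(\xi-\xi^{-1})=\frac{\xi^{n-1}-(-\xi)^{-n+1}}{\xi+\xi^{-1}}$. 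To repair your write-up, either adopt that sign convention per node, or prove the correct statement $g_n(\xi)=(-1)^nf_n(\xi-\xi^{-1})$ and add the parity argument showing the signs cancel in the product over $V$.
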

\begin{proof}
\begin{equation*}
\begin{split}
(L.H.S)&=\sum_{\{x_i\}}\sum_{s \subset E}
\prod_{ij \in s} x_i x_j \beta_{ij} \xi_{i}^{-x_i}\xi_{j}^{-x_j} 
\prod_{i \in V} \frac{\xi_{i}^{x_i}}{\xi_{i}+\xi_{i}^{-1}} \\
&=
\sum_{s \subset E}
\prod_{ij \in s} \beta_{ij}
\prod_{i \in V} \sum_{x_i=\pm 1}(-x_i  \xi_{i}^{-x_i})^{d_{i}(s)}
\frac{\xi_{i}^{x_i}}{\xi_{i}+\xi_{i}^{-1}} \\ 
&=
\sum_{s \subset E}
\prod_{ij \in s} \beta_{ij}
\prod_{i \in V} 
\frac{-(-\xi_{i})^{-d_i(s)+1}+\xi_{i}^{d_i(s)-1}}{\xi_{i}+\xi_{i}^{-1}}.
\end{split}
\end{equation*}
On the other hand, by the definition of $f_n$
\begin{equation}
f_n(\xi-\xi^{-1})=
\frac{\xi^{n-1}-(-\xi)^{-n+1}}{\xi_{}+\xi_{}^{-1}}. \label{fn}
\end{equation}
\end{proof}

%

Secondly, we give a relation between the true partition function and the
Bethe approximation.
\begin{lem}
\begin{equation}
\frac{Z}{Z_B} =
\sum_{\{x_i\}}
\prod_{ij \in E}
\frac{b_{ij}(x_i,x_j)}{b_i(x_i)b_j(x_j)}
\prod_{i \in V}
b_i(x_i) \label{lem1}
\end{equation}
\end{lem}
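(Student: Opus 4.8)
The plan is to express the true partition function $Z$ as a sum over configurations and then rewrite the summand by inserting the Bethe beliefs, so that the ratio $Z/Z_B$ emerges naturally. First I would recall that, after absorbing the univariate potentials $\phi_i$ into the $\psi_{ij}$, we have $Z = \sum_{\{x_i\}} \prod_{ij \in E} \psi_{ij}(x_i, x_j)$. The key is to relate $\psi_{ij}$ to the pair belief $b_{ij}$ and the node beliefs $b_i$. From the LBP fixed-point equations \eqref{2marginal} and \eqref{1marginal}, one reads off that $b_{ij}(x_i,x_j)$ is proportional to $\psi_{ij}(x_i,x_j)$ times the product of incoming messages \emph{excluding} the edge $ij$, while $b_i(x_i)$ is proportional to the product of \emph{all} incoming messages at $i$. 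Hence the quotient $b_{ij}(x_i,x_j)/(b_i(x_i)b_j(x_j))$ is proportional to $\psi_{ij}(x_i,x_j)$ divided by the two ``missing'' messages $m^*_{(i,j)}(x_i)\, m^*_{(j,i)}(x_j)$, up to $\omega$-constants. Therefore $\prod_{ij} b_{ij}/(b_ib_j) \cdot \prod_i b_i$ equals $\prod_{ij}\psi_{ij}$ multiplied by a configuration-independent constant: at each node $i$, the $d_i$ messages appearing in the numerator of $b_i$ exactly cancel the $d_i$ messages appearing (one per incident edge) in the denominators of the $b_{ij}/(b_ib_j)$ factors.

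The key step is this bookkeeping of messages. I would fix a configuration $\{x_i\}$ and count, for each directed edge $(i,j)\in\vec E$, how many times $m^*_{(i,j)}(x_i)$ occurs with which sign of exponent across the whole product $\prod_{ij\in E}\frac{b_{ij}(x_i,x_j)}{b_i(x_i)b_j(x_j)}\prod_{i\in V}b_i(x_i)$. The message $m^*_{(i,j)}(x_i)$ appears once in the numerator of $b_{ij}(x_i,x_j)$ (as one of the $k'\in N(i)\setminus\{j\}$? no --- careful: in $b_{ij}$ the product over $N(i)\setminus\{j\}$ does \emph{not} include $m^*_{(i,j)}$), so in fact $m^*_{(i,j)}(x_i)$ appears: once in the numerator of $b_i(x_i)$ (the product over all $j\in N(i)$); once in the numerator of $b_{ik}(x_i,x_k)$ for each $k\in N(i)\setminus\{j\}$; in the denominator it appears once per incident edge of $i$, i.e. $d_i$ times, through the $b_i(x_i)$ in each factor $b_{ik}/(b_ib_k)$. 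Tallying: numerator count $=1+(d_i-1)=d_i$, denominator count $=d_i$. So every message cancels, leaving only $\psi_{ij}$ factors and powers of $\omega$, which are independent of $\{x_i\}$. Summing over $\{x_i\}$ then gives $Z$ times that constant, and one identifies the constant as $1/Z_B$ using the energy/entropy formula \eqref{Bethe} together with the normalization $\sum b_{ij}=b_i$, $\sum b_i=1$.

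I expect the main obstacle to be the precise verification that the $\omega$-normalization constants and the $\psi$-independent factors assemble into exactly $1/Z_B$ rather than some other constant. There are several ways to pin this down: one clean route is to avoid tracking $\omega$'s altogether by instead \emph{defining} the right-hand side of \eqref{lem1}, call it $\tilde Z$, and showing $\log\tilde Z$ reproduces the Bethe free energy relation. More concretely, take logarithms: $\log\frac{b_{ij}}{b_ib_j}+\text{(node terms)}$ should, after using the message cancellation, reduce to $\log\psi_{ij} - (\text{terms that integrate to the Bethe entropy})$, matching \eqref{Bethe} term by term. Alternatively, one checks the identity on the tree case, where LBP is exact, $Z=Z_B$, and the right-hand side of \eqref{lem1} visibly telescopes to $\sum_{\{x_i\}}p(x)=1$; then argue that both sides of \eqref{lem1} depend on the $\psi_{ij}$ only through the fixed-point messages and beliefs in a way that forces the general identity. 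I would present the message-counting argument as the core, and relegate the $Z_B$-identification to a short computation invoking \eqref{Bethe} and the normalization conventions stated just after it.
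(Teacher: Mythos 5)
Your proposal is correct and follows essentially the same route as the paper: write the beliefs with explicit normalization constants, observe that all fixed-point messages cancel in $\prod_{ij}\frac{b_{ij}}{b_i b_j}\prod_i b_i$ leaving $\prod_{ij}\psi_{ij}$ times a configuration-independent constant, and identify that constant as $1/Z_B$ (i.e.\ $Z_B=\prod_{ij}\frac{c_{ij}}{c_i c_j}\prod_i c_i$) from the Bethe free-energy formula together with the marginalization consistency of the beliefs. Your message-counting tally ($d_i$ occurrences in numerator and denominator at each node) is exactly the cancellation the paper uses, and the constant identification you defer to a short computation is precisely the step the paper dismisses as ``easy to see.''
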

\begin{proof}
If we write the normalization terms explicitly, (\ref{1marginal}) and
 (\ref{2marginal}) are rewritten as
\begin{equation}
\begin{split} \label{lem1p1}
&b_i(x_i)= c_i^{-1} \prod_{j \in N(i)}m^{*}_{(i,j)}(x_i), 
\hspace{10mm} {} \\
\vspace{-4mm}
&b_{ji}(x_j,x_i)= 
c_{ji}^{-1} \psi_{ji}(x_j,x_i)  
\hspace{-5mm}
\prod_{k \in
 N(j)\backslash \{i\}} 
\hspace{-5mm} m^*_{(j,k)}(x_j) 
\hspace{-5mm}
\prod_{k' \in N(i) \backslash \{j\}} 
\hspace{-6mm} m^*_{(i,k')}(x_i) .
\end{split}
\end{equation}
By the definition of the Bethe approximation of partition function, it is easy to see that
\begin{equation}
Z_B=\prod_{ij \in E}
\frac{c_{ij}}{c_i c_j}
\prod_{i \in V}c_i.
\end{equation}
Then, we use (\ref{lem1p1}) again, the right hand side of (\ref{lem1}) is equal to 
$\frac{Z}{Z_B}$.
\end{proof}

Finally, we give the loop series expansion formula of the partition
function \cite{watanabe,LoopPRE,Loop}.
\begin{thm} \label{thmloop}
Let 
\begin{equation}
\gamma_{i}:=\frac{b_i(1)-b_i(-1)}{\sqrt{b_i(1)b_i(-1)}}, \label{defgamma}
\end{equation}
and
\begin{equation}
\beta_{ij}:=\frac{b_{ij}(1,1)b_{ij}(-1,-1)-b_{ij}(1,-1)b_{ij}(-1,1)}
{\sqrt{b_i(1)b_i(-1)}\sqrt{b_j(1)b_j(-1)}}.  \label{defbeta}
\end{equation}
Then, the following formula holds.
\begin{gather}
Z=Z_B
\sum_{s \subset E } r(s) ,\label{LSE} \\
r(s):=\prod_{ij \in s}\beta_{ij} 
\prod_{i \in V} f_{d_{i}(s)}(\gamma_{i}).
\end{gather}
\end{thm}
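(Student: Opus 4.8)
The plan is to choose the free variables in Theorem~1 so that its left-hand side becomes \emph{exactly} the sum appearing in Lemma~1, and then simply read off the right-hand side. The positivity of the compatibility functions makes $b_i(\pm1)>0$, so I would set $\xi_i:=\sqrt{b_i(1)/b_i(-1)}$ for each $i\in V$. Using the normalization $b_i(1)+b_i(-1)=1$ one gets $\xi_i+\xi_i^{-1}=1/\sqrt{b_i(1)b_i(-1)}$, from which the two elementary identities
\begin{equation*}
\frac{\xi_i^{x_i}}{\xi_i+\xi_i^{-1}}=b_i(x_i)\quad(x_i=\pm1),\qquad \xi_i-\xi_i^{-1}=\gamma_i
\end{equation*}
follow by direct substitution; in particular $b_i(x_i)=\xi_i^{x_i}\sqrt{b_i(1)b_i(-1)}$.

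Next I would rewrite the edge factors $b_{ij}(x_i,x_j)/\bigl(b_i(x_i)b_j(x_j)\bigr)$. Writing $b_{ij}(x_i,x_j)=b_i(x_i)b_j(x_j)+g(x_i,x_j)$, the marginalization constraints $\sum_{x_j}b_{ij}(x_i,x_j)=b_i(x_i)$ and $\sum_{x_i}b_{ij}(x_i,x_j)=b_j(x_j)$ force all ``marginal sums'' of $g$ over $\{\pm1\}^2$ to vanish, so $g(x_i,x_j)=x_ix_j\,\delta_{ij}$ for a single scalar $\delta_{ij}$. Expanding the $2\times2$ determinant in (\ref{defbeta}) then shows that its numerator equals $\delta_{ij}$, hence $\delta_{ij}=\beta_{ij}\sqrt{b_i(1)b_i(-1)}\sqrt{b_j(1)b_j(-1)}$. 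Combining this with $b_i(x_i)=\xi_i^{x_i}\sqrt{b_i(1)b_i(-1)}$ gives
\begin{equation*}
\frac{b_{ij}(x_i,x_j)}{b_i(x_i)b_j(x_j)}=1+x_ix_j\,\beta_{ij}\,\xi_i^{-x_i}\xi_j^{-x_j}.
\end{equation*}

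Finally I would substitute the two displayed identities into the right-hand side of Lemma~1; it then coincides termwise with the left-hand side of (\ref{thm1}). Applying Theorem~1 yields $Z/Z_B=\sum_{s\subset E}\prod_{ij\in s}\beta_{ij}\prod_{i\in V}f_{d_i(s)}(\xi_i-\xi_i^{-1})$, and replacing $\xi_i-\xi_i^{-1}$ by $\gamma_i$ and multiplying through by $Z_B$ gives (\ref{LSE}). The only genuine content beyond bookkeeping is the parametrization $b_{ij}(x_i,x_j)=b_i(x_i)b_j(x_j)+x_ix_j\delta_{ij}$ together with the identification of $\delta_{ij}$ with the numerator of $\beta_{ij}$; I expect that determinant computation to be the main (still routine) obstacle, while the remaining work is just verifying the elementary $\xi_i$-identities and invoking Theorem~1.
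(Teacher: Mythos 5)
Your proposal is correct and follows essentially the same route as the paper: both proofs reduce Lemma~1's right-hand side to the left-hand side of Theorem~1 by exhibiting the parametrization $b_{ij}(x_i,x_j)/(b_i(x_i)b_j(x_j))=1+x_ix_j\beta_{ij}\xi_i^{-x_i}\xi_j^{-x_j}$ with $\xi_i-\xi_i^{-1}=\gamma_i$. Your version merely makes explicit (via the choice $\xi_i=\sqrt{b_i(1)/b_i(-1)}$ and the zero-marginal decomposition of $b_{ij}$) what the paper asserts as "we can always choose $\xi_i,\xi_j,\beta_{ij}'$," which is a slightly more complete write-up of the identical argument.
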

Before accomplishing the proof, let us consider the meaning of the
theorem. Equation (\ref{defgamma}) states that $\gamma_{i}$ is related to
the bias of the approximated marginal $b_i(x_i)$, and $\gamma_{i}=0$ if and only if $b_{i}(1)=b_{i}(-1)$.
Equation (\ref{defbeta}) states that $\beta_{ij}$ is related to the 
correlation by $b_{ij}$.
It is easy to see that $|\beta_{ij}| \leq 1$ and $\beta_{ij}=0$ if and
only if $b_{ij}(x_i,x_j)=b_{i}(x_i)b_{j}(x_j)$.

The definitions and properties of these quantities $\beta_{ij}$ and
$\gamma_i$, in the context of message passing procedures, are found in \cite{watanabe}.

In (\ref{LSE}), the summation runs over all subsets of $E$ including
$s=\phi$. As $f_1(x)=0$, $s$ makes a contribution to the sum only if $s$ does not have a node
of degree one in $s$. Such a subgraph $s$ is called a generalized loop
\cite{LoopPRE,Loop} or a closed graph \cite{NaglePR,NagleJC}.
In the case that $G$ is a tree,
there is no generalized loops, therefore $Z=Z_B$. 
This is an alternative proof of the well-known fact
\cite{Pearl}.
 
If $\beta_{ij}$ an $\gamma_i$ are sufficiently small, the first term
$r(\phi)=1$ is mainly contribute to the sum, and $Z_B$ is close to $Z$.

\begin{proof}[Proof of theorem \ref{thmloop}]
For a given $b_{ij}(x_i,x_j)$ which satisfies the 
normalization condition
 $\sum_{x_i,x_j}b_{ij}(x_i,x_j)=1$,
we can always choose $\xi_i,\xi_j,\beta_{ij}'$ to satisfy 
\begin{equation}
b_{ij}(x_i,x_j)=\frac{1}{(\xi_i+\xi_i^{-1})(\xi_j+\xi_j^{-1})}
(\xi_i^{x_i}\xi_j^{x_j}+\beta_{ij}' x_i x_j).
\end{equation}
From (\ref{defbeta}), we see that $\beta_{ij}'=\beta_{ij}$.
Using
\begin{equation}
b_i(x_i)=\sum_{x_j}b_{ij}(x_i,x_j)=\frac{\xi_i^{x_i}}{\xi_i+\xi_i^{-1}}
\end{equation}
and (\ref{defgamma}), we have $\gamma_{i}=\xi_i-\xi_i^{-1}$.
Notice that 
\begin{equation}
\frac{b_{ij}(x_i,x_j)}{b_i(x_i)b_j(x_j)}=1+x_i x_j \beta_{ij}\xi_i^{-x_i}\xi_j^{-x_j},
\end{equation}
the left hand side of (\ref{thm1}) is equal to the right hand
 side of (\ref{lem1}) and the assertion is proved.
\end{proof}

\subsubsection*{Example 1}
\begin{figure}[h]
\hspace{16mm}
\includegraphics[scale=0.26]{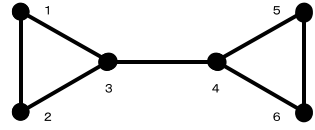} 
\caption{Original graph\label{loopexp}}
\end{figure}

\begin{figure}[h]
\hspace{8mm}
\includegraphics[scale=0.12]{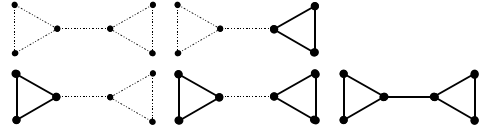} 
\caption{List of generalized loops\label{loopexp}}
\end{figure}

We give an example of the formula given in theorem 2.
Let $G$ be the graph shown in figure 1.
In this case, there are five generalized loops, and the expansion formula
becomes
\begin{equation}
\begin{split}
\frac{Z}{Z_B}
=
1+\beta_{12}\beta_{23}\beta_{13}+\beta_{45}\beta_{56}\beta_{46} 
+\beta_{12}\beta_{23}\beta_{13}\beta_{45}\beta_{56}\beta_{46} \\
\hspace{-5mm}
+\beta_{12}\beta_{23}\beta_{13}\beta_{34}\beta_{45}\beta_{56}\beta_{46}\gamma_{3}\gamma_{4}.
\end{split}
\end{equation}

\subsection{Expansion of marginals}
In this subsection, we head for proving theorem 3.
We define a set of polynomials $\{g_n(x)\}_{n=0}^{\infty}$ 
inductively by the relations $g_0(x)=x,g_1(x)=-2$ and $g_{n+1}(x)=x
g_n(x) + g_{n-1}(x)$. This set of polynomials is a transformation of the
Chebyshev polynomials of the first kind.
We introduce the following lemma which is a modification of theorem 1.
\begin{lem}
\begin{equation}
\begin{split}
\sum_{x_1,\ldots,x_N=\pm 1}
\hspace{-2mm}
x_1
\prod_{ij \in E}
(1+x_{i}x_{j}\beta_{ij}\xi_{i}^{-x_i}\xi_{j}^{-x_j})
\prod_{i \in V}
\frac{\xi_{i}^{x_i}}{\xi_{i}+\xi_{i}^{-1}} \\ 
=
\sum_{s \subset E}
\hspace{-0.4mm}
\prod_{ij \in s} \beta_{ij}
\hspace{-0.4mm}
\prod_{i \in V \backslash \{1\}} 
\hspace{-1.2mm}
f_{d_i(s)}(\xi_{i}-\xi_{i}^{-1})
\frac{
g_{d_1(s)}
\hspace{-0.6mm}
(\xi_{1}-\xi_{1}^{-1})
}
{\xi_{1}+\xi_{1}^{-1}}
.
\end{split}\label{lem2}
\end{equation}
\end{lem}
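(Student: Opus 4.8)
The plan is to follow the proof of Theorem~1 almost verbatim, carrying the extra factor $x_1$ through the vertex-wise factorization and arranging that the node-$1$ contribution produces $g_{d_1(s)}$ rather than $f_{d_1(s)}$. First I would expand the edge product by multilinearity, $\prod_{ij\in E}(1+x_ix_j\beta_{ij}\xi_i^{-x_i}\xi_j^{-x_j})=\sum_{s\subset E}\prod_{ij\in s}x_ix_j\beta_{ij}\xi_i^{-x_i}\xi_j^{-x_j}$, substitute into the left-hand side of (\ref{lem2}), and interchange the two finite sums. For a fixed $s\subset E$ the summand is $x_1\prod_{i\in V}x_i^{d_i(s)}\xi_i^{-x_id_i(s)}\prod_{i\in V}\tfrac{\xi_i^{x_i}}{\xi_i+\xi_i^{-1}}$; exactly as in Theorem~1 I would use $\sum_{i\in V}d_i(s)=2|s|$ to rewrite $\prod_i x_i^{d_i(s)}=\prod_i(-x_i)^{d_i(s)}$, and then write the lone $x_1$ as $-(-x_1)$ so that it gets absorbed into the node-$1$ factor at the cost of an overall sign, giving $x_1\prod_i x_i^{d_i(s)}=-(-x_1)^{d_1(s)+1}\prod_{i\neq 1}(-x_i)^{d_i(s)}$.

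After this rearrangement the sum over $\{x_i=\pm1\}$ factorizes over nodes. For each $i\neq 1$ the factor is precisely the one already evaluated in Theorem~1, namely $\sum_{x_i}(-x_i\xi_i^{-x_i})^{d_i(s)}\tfrac{\xi_i^{x_i}}{\xi_i+\xi_i^{-1}}=f_{d_i(s)}(\xi_i-\xi_i^{-1})$ by (\ref{fn}). For node $1$ I would evaluate the two-term sum $\sum_{x_1=\pm1}(-x_1)^{d_1(s)+1}\xi_1^{-x_1d_1(s)}\tfrac{\xi_1^{x_1}}{\xi_1+\xi_1^{-1}}$ directly: writing $d:=d_1(s)$, the $x_1=1$ and $x_1=-1$ terms give $\tfrac{(-1)^{d+1}\xi_1^{1-d}+\xi_1^{d-1}}{\xi_1+\xi_1^{-1}}$, so together with the overall sign the node-$1$ contribution is $\tfrac{(-1)^{d}\xi_1^{1-d}-\xi_1^{d-1}}{\xi_1+\xi_1^{-1}}$.

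It then remains to recognize the numerator as $g_d(\xi_1-\xi_1^{-1})$, i.e.\ to establish the counterpart of (\ref{fn}) for the $g_n$,
\begin{equation*}
g_n(\xi-\xi^{-1})=(-1)^n\xi^{1-n}-\xi^{n-1}.
\end{equation*}
I would prove this by induction on $n$: the base cases $n=0$ ($g_0(x)=x$) and $n=1$ ($g_1(x)=-2$) are immediate, and the inductive step follows by inserting the formulas for $g_n$ and $g_{n-1}$ into $g_{n+1}(x)=xg_n(x)+g_{n-1}(x)$ with $x=\xi-\xi^{-1}$, where the two ``middle'' monomials cancel because $(-1)^n$ and $(-1)^{n-1}$ have opposite signs. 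Assembling all the node factors then gives exactly the right-hand side of (\ref{lem2}).

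The computation is entirely routine once it is set up; the only point requiring care is the sign bookkeeping. In Theorem~1 the replacement $x_i^{d_i(s)}\to(-x_i)^{d_i(s)}$ was globally neutral because $\sum_i d_i(s)$ is even, but the extra factor $x_1$ unbalances this at node $1$, which is precisely why one picks up the overall minus sign and why the node-$1$ sum produces a $g$-polynomial (with its negated leading behaviour) rather than an $f$-polynomial. Equivalently the same sign can be tracked through the handshake lemma — the number of nodes of odd degree in $s$ is even — but absorbing it as above keeps the argument closest to the proof of Theorem~1.
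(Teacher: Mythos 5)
Your proposal is correct and follows essentially the same route as the paper: expand the edge product as in Theorem~1, factorize the sum over nodes, and observe that the extra factor $x_1$ turns the node-$1$ sum into $g_{d_1(s)}(\xi_1-\xi_1^{-1})/(\xi_1+\xi_1^{-1})$ instead of $f_{d_1(s)}(\xi_1-\xi_1^{-1})$. The paper merely says ``check'' the node-$1$ identity, whereas you supply the closed form $g_n(\xi-\xi^{-1})=(-1)^n\xi^{1-n}-\xi^{n-1}$ with its induction and the explicit sign bookkeeping — a more complete write-up of the same argument.
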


\begin{proof}
Check that
\begin{equation}
\sum_{x_1=\pm 1}x_1 (-x_1 \xi_1^{-x_1})^{n}
\frac{\xi_{1}^{x_1}}{\xi_{1}+\xi_{1}^{-1}}
=
\frac{
g_{n}
(\xi_{1}-\xi_{1}^{-1})
}
{\xi_{1}+\xi_{1}^{-1}}.
\end{equation}
\end{proof}

\begin{thm} \rm{\cite{watanabe}}  \label{thmmarginal}
\it{Let $p_1(x_1)$ be the true marginal distribution defined by the joint
 probability distribution (\ref{MRF}).
Then,}
\begin{equation}
\hspace{-1mm}
\frac{Z}{Z_B}
\frac{p_{1}(+1)-p_{1}(-1)}{\sqrt{b_{1}(1)b_{1}(-1)}}
\hspace{-1mm}
=
\sum_{s \subset E}
\prod_{ij \in s}
\hspace{-1mm}
\beta_{ij} 
\hspace{-4mm}
\prod_{i \in V \backslash \{1\}} 
\hspace{-3mm}
f_{d_{i}(s)}
\hspace{-0.6mm}
(\gamma_{i})
g_{d_1(s)}
\hspace{-0.6mm}
(\gamma_1). \label{thm3}
\end{equation}
\end{thm}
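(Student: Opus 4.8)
The plan is to re-run the proof of Theorem~\ref{thmloop} with an extra factor $x_1$ carried through, substituting Lemma~\ref{lem2} for the identity (\ref{thm1}). The first step is to notice that the argument for Lemma~\ref{lem1} in fact proves the pointwise identity
\begin{equation*}
\prod_{ij\in E}\frac{b_{ij}(x_i,x_j)}{b_i(x_i)\,b_j(x_j)}\prod_{i\in V}b_i(x_i)=\frac{Z}{Z_B}\,p(x),
\end{equation*}
before summing over $\{x_i\}$: inserting the fixed-point relations (\ref{lem1p1}) and counting the $m^{*}$-factors at each node degree by degree, every message cancels, the product $\prod_{ij\in E}\psi_{ij}(x_i,x_j)$ survives, and the leftover normalization constants assemble into $Z_B^{-1}$. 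Multiplying this identity by $x_1$ and summing over all $x\in\{\pm1\}^{V}$, and using $\sum_x x_1\,p(x)=p_1(+1)-p_1(-1)$, I get
\begin{equation*}
\frac{Z}{Z_B}\bigl(p_1(+1)-p_1(-1)\bigr)=\sum_{\{x_i\}}x_1\prod_{ij\in E}\frac{b_{ij}(x_i,x_j)}{b_i(x_i)\,b_j(x_j)}\prod_{i\in V}b_i(x_i).
\end{equation*}

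Next I would apply exactly the reparametrization used in the proof of Theorem~\ref{thmloop}: pick $\xi_i,\beta_{ij}$ so that $b_i(x_i)=\xi_i^{x_i}/(\xi_i+\xi_i^{-1})$ and $b_{ij}(x_i,x_j)/(b_i(x_i)b_j(x_j))=1+x_ix_j\beta_{ij}\xi_i^{-x_i}\xi_j^{-x_j}$, with $\gamma_i=\xi_i-\xi_i^{-1}$ and $\beta_{ij}$ matching (\ref{defbeta}). The right-hand side above then becomes the left-hand side of (\ref{lem2}), so Lemma~\ref{lem2} evaluates it, after substituting $\gamma_i=\xi_i-\xi_i^{-1}$, as
\begin{equation*}
\sum_{s\subset E}\prod_{ij\in s}\beta_{ij}\prod_{i\in V\backslash\{1\}}f_{d_i(s)}(\gamma_i)\,\frac{g_{d_1(s)}(\gamma_1)}{\xi_1+\xi_1^{-1}}.
\end{equation*}
Finally I would clear the denominator: since $b_1(x_1)=\xi_1^{x_1}/(\xi_1+\xi_1^{-1})$ gives $b_1(1)b_1(-1)=(\xi_1+\xi_1^{-1})^{-2}$, we have $1/\sqrt{b_1(1)b_1(-1)}=\xi_1+\xi_1^{-1}$; multiplying through by this factor cancels the denominator at node~$1$ and converts the left-hand side into $\frac{Z}{Z_B}(p_1(+1)-p_1(-1))/\sqrt{b_1(1)b_1(-1)}$, which is (\ref{thm3}).

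There is no real obstacle here, because all the combinatorics is already supplied by Lemmas~\ref{lem1} and \ref{lem2}; the only points deserving care are (i) observing that Lemma~\ref{lem1} is actually established pointwise, so that inserting the factor $x_1$ inside the sum is legitimate, and (ii) verifying that the prefactor $\xi_1+\xi_1^{-1}$ produced at node~$1$ by Lemma~\ref{lem2} is precisely $1/\sqrt{b_1(1)b_1(-1)}$, so that $g_{d_1(s)}(\gamma_1)$ emerges with no spurious coefficient. Should one wish to avoid quoting the internal computation of Lemma~\ref{lem1}, the biased identity in the first step can be derived directly from the LBP fixed-point equations by the same degree-counting of messages.
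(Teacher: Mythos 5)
Your proposal is correct and follows essentially the same route as the paper: the paper's proof consists precisely of stating the ``key fact'' $\frac{Z}{Z_B}(p_1(1)-p_1(-1))=\sum_{\{x_i\}}x_1\prod_{ij\in E}\frac{b_{ij}(x_i,x_j)}{b_i(x_i)b_j(x_j)}\prod_{i\in V}b_i(x_i)$ as a modification of Lemma 1 and then repeating the proof of Theorem 2 with the $g_n$-identity in place of Theorem 1. You supply more detail than the paper (the pointwise cancellation of messages and the check that $\xi_1+\xi_1^{-1}=1/\sqrt{b_1(1)b_1(-1)}$), and both checks are right.
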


\begin{proof}
The key fact is
\begin{equation*}
\frac{Z}{Z_B}(p_1(1)-p_1(-1))
\hspace{-0.5mm}
=
\hspace{-0.5mm}
\sum_{\{x_i\}}
x_1
\hspace{-0.5mm}
\prod_{ij \in E}
\hspace{-0.5mm}
\frac{b_{ij}(x_i,x_j)}{b_i(x_i)b_j(x_j)}
\prod_{i \in V}
b_i(x_i),
\end{equation*}
which is a modification of lemma 1.
Follow the proof of theorem 2.
\end{proof}

A problem of finding an assignment that maximize the marginal probability $p_1$
is called maximum marginal assignment problem.
This problem is especially important in the application of error
correcting code \cite{LDPC,Turbo}.
It is known that if the graph has one cycle and the concerning node 
$1 \in V$ is
on the unique cycle, then the assignment that maximize the belief $b_1$
gives the exact assignment \cite{1loop}.
With theorem 3, this fact is easily deduced as follows \cite{watanabe}.
\begin{cor}
 Let $G$ be a graph with a single cycle with the node $1$ on it.
See figure \ref{figure1loop} for example.
Then,
$p_1(1)-p_1(-1)$ and
$b_1(1)-b_1(-1)$ have the same sign.
\end{cor}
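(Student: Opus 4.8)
The plan is to specialize Theorem \ref{thmmarginal} to the single-cycle graph $G$ and read the sign off the resulting finite sum. Let $C\subset E$ denote the unique cycle of $G$. The first step is to identify which subsets $s\subset E$ contribute a nonzero term to the right-hand side of (\ref{thm3}). Any $s$ in which some node $i\neq 1$ has degree one contributes zero because $f_1=0$. The remaining subsets are constrained by the fact that a nonempty finite graph with minimum degree at least two contains a cycle, and $G$ has only the cycle $C$: I would use a short counting argument on the component of $s$ containing node $1$ to show that, because node $1$ lies on $C$, the only nonempty $s$ with no leaf away from node $1$ is $s=C$ itself, and there is no nonempty $s$ with $d_1(s)=1$ and all other degrees different from one (so the factor $g_1=-2\neq 0$ never gets a chance to contribute). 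Hence only $s=\emptyset$ and $s=C$ survive.

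Next I would evaluate these two terms with the recursions: $f_0=f_2=1$, $g_0(x)=x$, and $g_2(x)=x g_1(x)+g_0(x)=-2x+x=-x$. For $s=\emptyset$ every degree is zero, giving $g_0(\gamma_1)=\gamma_1$; for $s=C$ node $1$ and every other node of $C$ has degree two while the rest have degree zero, so all the $f$-factors equal $f_2=1$ or $f_0=1$ and the term is $\big(\prod_{ij\in C}\beta_{ij}\big)g_2(\gamma_1)=-\gamma_1\prod_{ij\in C}\beta_{ij}$. Theorem \ref{thmmarginal} therefore collapses to
\begin{equation*}
\frac{Z}{Z_B}\,\frac{p_1(1)-p_1(-1)}{\sqrt{b_1(1)b_1(-1)}}=\gamma_1\Big(1-\prod_{ij\in C}\beta_{ij}\Big).
\end{equation*}

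Finally I would track signs. The right-hand side of (\ref{lem1}) is a sum of strictly positive terms, so $Z/Z_B>0$, and $\sqrt{b_1(1)b_1(-1)}>0$. Since the compatibility functions are strictly positive the beliefs $b_{ij}$ are strictly positive, and imposing positivity on the four values of the parametrization of $b_{ij}$ used in the proof of Theorem \ref{thmloop} forces $|\beta_{ij}|<1$ for every edge; hence $\big|\prod_{ij\in C}\beta_{ij}\big|<1$ and $1-\prod_{ij\in C}\beta_{ij}>0$. Consequently $p_1(1)-p_1(-1)$ has the same sign as $\gamma_1$, which by (\ref{defgamma}) has the same sign as $b_1(1)-b_1(-1)$, which is the claim. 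The only delicate point is the first step — the combinatorial verification that node $1$ lying on the cycle rules out every contributing subgraph except $\emptyset$ and $C$, including those in which node $1$ would be a leaf — together with the strict bound $|\beta_{ij}|<1$; everything else is immediate.
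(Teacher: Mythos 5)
Your proof is correct and follows essentially the same route as the paper: only $s=\emptyset$ and $s=C$ survive in (\ref{thm3}), the two terms collapse to $\gamma_1\bigl(1-\prod_{ij\in C}\beta_{ij}\bigr)$, and the sign is read off. You are in fact slightly more careful than the paper on two points --- explicitly ruling out subgraphs in which node $1$ alone has degree one (needed since $g_1=-2\neq 0$), and upgrading the paper's $|\beta_{ij}|\leq 1$ to the strict bound $|\beta_{ij}|<1$ via positivity of the beliefs, without which one only gets non-strict positivity of the proportionality factor.
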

\begin{proof}
In the right hand side of (\ref{thm3}), only two subgraphs $s$ are
 contribute to the sum.
From $g_0(\gamma_1)=\gamma_1$ , $g_2(\gamma_1)=-\gamma_1$ and
 $|\beta_{ij}|\leq 1$, we see that the sum is positively proportional to
 $\gamma_1$.
\end{proof}

\vspace{-3mm}
\begin{figure}[h]
\hspace{25mm}
\includegraphics[scale=0.2]{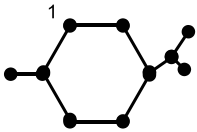} 
\caption{Graph with a single cycle\label{figure1loop}}
\end{figure}
\vspace{-3mm}

We append a comment on theorem 3.
In the sum of (\ref{thm3}), the contribution of $s=\phi$ is equal
to $g_{0}(\gamma_{1})=\gamma_{1}$: this is proportional to
$b_1(+)-b_1(-1)$.
The assignment that maximize the belief $b_1$ can be regarded as the Bethe
approximation in the view point of theorem 3.

\section{Bound on the number of generalized loops}
In the loop series expansion formula (\ref{LSE}), the summation runs
over all generalized loops. To know the computational cost for summing
up the terms, the number of generalized loops is of interest. 

\begin{de}
\begin{equation}
\theta_{G}(\beta,\xi):= 
\sum_{s \subset E}
 \beta^{|s|}
\prod_{i \in V} f_{d_i(s)}(\xi-\xi^{-1}).   \label{deftheta}
\end{equation}
\end{de}
\vspace{1mm}
This is a bivariate (Laurent) polynomial with respect to $\beta$ and
$\xi$. The coefficients are all integers.

Let $n(G):=|E|-|V|+1$ be the number of linearly independent cycles.
The next lemma shows that the value of $\theta_{G}$ on $\beta=1$ is
determined by $n(G)$.
\begin{lem}
\begin{equation}
\theta_{G}(1,\xi)=
\sum_{k=0}^{n(G)} {n(G) \choose k}f_{2k}(\xi-\xi^{-1}) \label{lem2}
\end{equation}
\end{lem}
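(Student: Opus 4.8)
The plan is to obtain this identity as a specialization of Theorem~1 rather than by manipulating the subgraph sum directly. In equation (\ref{thm1}) I would set every edge variable $\beta_{ij}$ equal to $1$ and every node variable $\xi_i$ equal to a single variable $\xi$. The right-hand side then collapses exactly to $\sum_{s\subset E}\prod_{i\in V}f_{d_i(s)}(\xi-\xi^{-1})=\theta_G(1,\xi)$, so the task reduces to evaluating the left-hand side under this specialization and showing it equals $\sum_{k=0}^{n(G)}\binom{n(G)}{k}f_{2k}(\xi-\xi^{-1})$.

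The key observation for the left-hand side is that the edge factor $1+x_ix_j\xi^{-x_i}\xi^{-x_j}$ equals $1+\xi^{-2}$ when $x_i=x_j=+1$ and $1+\xi^{2}$ when $x_i=x_j=-1$, but vanishes identically when $x_i\neq x_j$ (there the two terms cancel). Hence, using that $G$ is connected, the only spin configurations that contribute to the sum over $\{x_i\}$ are the two constant ones, $x_i\equiv+1$ and $x_i\equiv-1$. Collecting these two terms, the node factors contribute $\xi^{\pm|V|}/(\xi+\xi^{-1})^{|V|}$, and using the factorizations $1+\xi^{-2}=\xi^{-1}(\xi+\xi^{-1})$ and $1+\xi^{2}=\xi(\xi+\xi^{-1})$ the left-hand side simplifies to $(\xi+\xi^{-1})^{|E|-|V|}\bigl(\xi^{|V|-|E|}+\xi^{|E|-|V|}\bigr)$; writing $n:=n(G)=|E|-|V|+1$ this is $(\xi+\xi^{-1})^{n-1}\bigl(\xi^{n-1}+\xi^{1-n}\bigr)$.

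Finally I would expand the claimed right-hand side into the same closed form. Using (\ref{fn}) and the parity identity $(-\xi)^{1-2k}=-\xi^{1-2k}$, one gets $f_{2k}(\xi-\xi^{-1})=(\xi^{2k-1}+\xi^{1-2k})/(\xi+\xi^{-1})$, so the binomial sum becomes $\frac{1}{\xi+\xi^{-1}}\bigl[\xi^{-1}(1+\xi^{2})^{n}+\xi(1+\xi^{-2})^{n}\bigr]$; applying the two factorizations once more turns this into $(\xi+\xi^{-1})^{n-1}\bigl(\xi^{n-1}+\xi^{1-n}\bigr)$, which matches the left-hand side. I do not expect a genuine obstacle: the one real idea is the vanishing of the mixed edge factors, after which everything is a short Laurent-polynomial computation; the only point to watch is keeping the exponent bookkeeping (especially the parity in $f_{2k}$) symmetric so that both expressions land on the same form.
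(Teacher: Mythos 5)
Your proposal is correct and follows essentially the same route as the paper: specialize Theorem~1 at $\beta_{ij}=1$, $\xi_i=\xi$, observe that the edge factor vanishes whenever $x_i\neq x_j$ so that connectedness leaves only the two constant spin configurations, and then match the resulting closed form $(\xi+\xi^{-1})^{n-1}(\xi^{n-1}+\xi^{1-n})$ against the binomial sum via the explicit formula (\ref{fn}) for $f_{2k}$. Your write-up in fact carries out the final Laurent-polynomial verification more explicitly than the paper, which merely asserts it.
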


\begin{proof}
The left hand side of theorem 1 gives an alternative representation
 of $\theta_{G}$.
If $x_i \neq x_j$, then $1+x_{i}x_{j}\beta \xi^{-x_i}\xi^{-x_j}=0$.
As we assumed the graph $G$ is connected, only two terms of
 $x_1=\cdots=x_N=1$ and $x_1=\cdots=x_N=-1$ contribute to the sum.
Therefore,
\vspace{-3mm}
\begin{equation}
\begin{split}
\theta_{G}(1,\xi)
=(1+\xi^{-2})^{|E|}
\Big(\frac{\xi}{\xi+\xi^{-1}}\Big)^{|V|} \\
+
(1+\xi^{2})^{|E|}(\frac{\xi^{-1}}{\xi+\xi^{-1}})^{|V|} .\label{lem2p1}
\end{split}
\end{equation}
From (\ref{fn}), the right hand side of (\ref{lem2}) is
 equal to (\ref{lem2p1}).
\end{proof}
If $\xi=\frac{1+\sqrt{5}}{2}$, then $\xi-\xi^{-1}=1$. From (\ref{lem2}) or
(\ref{lem2p1}), we see that
\begin{equation*}
\theta_{G}(1,\frac{1+\sqrt{5}}{2})=
 \Biggl(\frac{5-\sqrt{5}}{2}\Biggr)^{n(G)-1}
\hspace{-2mm}
+
\Biggl(\frac{5+\sqrt{5}}{2}\Biggr)^{n(G)-1}.
\end{equation*}
This fact can be used to bound the number of generalized loops \cite{watanabe}.

\begin{thm}
Let $\mathcal{G}_{0}$ be the set of all generalized loops of $G$ including
 empty set. Then,
\begin{eqnarray}
\left| \mathcal{G}_{0} \right|
\leq
 \Biggl(\frac{5-\sqrt{5}}{2}\Biggr)^{n(G)-1}
\hspace{-2mm}
+
\Biggl(\frac{5+\sqrt{5}}{2}\Biggr)^{n(G)-1}.
\end{eqnarray}
This bound is attained if and only if every node of
a generalized loop has the degree at most three.
\end{thm}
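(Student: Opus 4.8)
The plan is to evaluate the bivariate polynomial $\theta_{G}$ of Definition~1 at the point $\beta=1$, $\xi=\frac{1+\sqrt{5}}{2}$ in two different ways and compare. The displayed evaluation immediately preceding this theorem already gives
\[
\theta_{G}\!\left(1,\tfrac{1+\sqrt{5}}{2}\right)
= \Biggl(\frac{5-\sqrt{5}}{2}\Biggr)^{n(G)-1}
+ \Biggl(\frac{5+\sqrt{5}}{2}\Biggr)^{n(G)-1},
\]
which is exactly the claimed right-hand side, so it remains to exhibit the same quantity as a lower bound for $|\mathcal{G}_0|$. For this I would use the definition (\ref{deftheta}) directly: since $\xi=\frac{1+\sqrt{5}}{2}$ satisfies $\xi-\xi^{-1}=1$, substituting $\beta=1$ gives
\[
\theta_{G}\!\left(1,\tfrac{1+\sqrt{5}}{2}\right)=\sum_{s\subset E}\ \prod_{i\in V} f_{d_i(s)}(1).
\]

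The next step is to record the integers $f_n(1)$: from $f_0(1)=1$, $f_1(1)=0$ and the recursion $f_{n+1}(1)=f_n(1)+f_{n-1}(1)$ one gets $f_2(1)=f_3(1)=1$ and $f_n(1)\ge 2$ for every $n\ge 4$ (these are Fibonacci numbers). Hence each factor $f_{d_i(s)}(1)$ is a nonnegative integer which vanishes exactly when $d_i(s)=1$ and equals $1$ exactly when $d_i(s)\in\{0,2,3\}$. Splitting the sum over $s\subset E$ accordingly, every $s$ having a vertex of degree one contributes $0$, while the remaining summands are indexed precisely by the generalized loops $s\in\mathcal{G}_0$ (including $s=\phi$), and each such summand is a product of factors each $\ge 1$, hence is itself $\ge 1$. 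Therefore
\[
\theta_{G}\!\left(1,\tfrac{1+\sqrt{5}}{2}\right)=\sum_{s\in\mathcal{G}_0}\ \prod_{i\in V} f_{d_i(s)}(1)\ \ge\ \sum_{s\in\mathcal{G}_0}1=|\mathcal{G}_0|,
\]
and combining this with the first identity proves the stated bound.

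For the equality clause, equality above is attained if and only if $\prod_{i\in V}f_{d_i(s)}(1)=1$ for every generalized loop $s$; since every factor is a positive integer on a generalized loop, this happens exactly when each factor equals $1$, i.e.\ when $d_i(s)\in\{0,2,3\}$ for all $i\in V$ and all $s\in\mathcal{G}_0$. Equivalently, no generalized loop of $G$ has a vertex of degree four or more, which is the assertion that every node of a generalized loop has degree at most three. The argument is essentially routine once the values $f_n(1)$ are identified; the only place demanding some care is this equality analysis, where one must use both that $f_2(1)=f_3(1)=1$ and that $f_n(1)$ is strictly larger for $n\ge 4$, together with the fact that degree-one vertices cannot appear in a generalized loop (so $f_1(1)=0$ never enters the comparison). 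I would expect this bookkeeping, rather than any genuine difficulty, to be the main point of the proof.
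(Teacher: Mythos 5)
Your proposal is correct and follows essentially the same route as the paper: evaluate $\theta_{G}$ at $\beta=1$, $\xi=\frac{1+\sqrt{5}}{2}$, use the preceding identity for the right-hand side, and bound each surviving term $\prod_{i}f_{d_i(s)}(1)\geq 1$ over generalized loops, with the equality analysis via $f_2(1)=f_3(1)=1$ and $f_n(1)\geq 2$ for $n\geq 4$. Your write-up is in fact slightly more careful than the paper's (which misstates the threshold as $n>4$).
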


\begin{proof}
If we set $\beta=1$ and $\xi=\frac{1+\sqrt{5}}{2}$,
\begin{equation}
\theta_{G}(1,\frac{1+\sqrt{5}}{2})
=
\sum_{s \in \mathcal{G}_0} r(s),
\end{equation}
where $r(s)=\prod_{i \in V}f_{d_{i}(s)}(1)$.
As $f_n(1)>1$ for all $n>4$ and $f_2(1)=f_3(1)=1$,
we have $r(s) \geq 1$ for all $s \in \mathcal{G}_{0}$
, and the equality holds if and only if 
$d_i(C) \leq 3$
for all $i\in V$.
This shows $\left| \mathcal{G}_{0} \right| \leq \theta(1,\frac{1+\sqrt{5}}{2})$ and
the equality condition.
\end{proof}

\section{Generalization to factor graph model}
In this section we briefly introduce the factor graph model, which is
more general
than the pairwise model in section 1. 
We generalize the result of theorem \ref{thmloop}.
Theorem \ref{thmmarginal} is also generalized straightforwardly.

\subsection{Factor graph model}
Let $H:=(V,F)$ be a hypergraph, that is, $V=\{1,\ldots,N\}$ is a set of
nodes and $F \subset 2^V$ is a set of hyperedges.
A hypergraph $H$ is represented by a bipartite graph $G_{H}=(V_H,E_H)$. Each
type of node corresponds to elements of $V$ and $F$; the first type is
called variable node and the second type is called factor node.
For example,
see figure 4.
In this example, the hypergraph $H=(V,F)$ is given by $V=\{1,2,3\}$ and
$F=\{\lambda_{1},\lambda_{2},\lambda_{3}\}$ ,where
$\lambda_{1}=\{1,2\},\lambda_{2}=\{1,2,3\}$ and $\lambda_{3}=\{2\}$.
\vspace{-3mm}
\begin{figure}[h]
\hspace{25mm}
\includegraphics[scale=0.2]{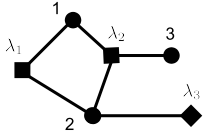} 
\caption{Example of $G_H$\label{figurehypergraph}}
\end{figure}
\vspace{-1mm}

The joint probability distribution is given by the following form:
\begin{equation}
p(x)=\frac{1}{Z}
\prod_{\lambda \in F} \psi_{\lambda}(x_{\lambda}),
\end{equation}
where $x_{\lambda}=\{x_i\}_{i \in \lambda  }$.
If each hyperedge $\lambda \in F$ consists of a pair of nodes, this
class of probability distributions reduces to the pairwise MRF model.

\subsection{Loopy belief propagation algorithm}
For a node $i \in V$ and a hyperedge $\lambda \in F$ which satisfy
$i \in \lambda$, messages $m_{(i,\lambda)}^{}(x_i)$ and
$m_{(\lambda,i)}(x_{\lambda})$
are defined and updated
by the following rules:
\begin{equation*}
\begin{split}
&m^{t+1}_{(i,\lambda)}(x_i)=\omega \sum_{x_{\lambda \setminus \{i\}}}
\hspace{0mm}\psi_{\lambda}(x_{\lambda})
\hspace{-1mm}
\prod_{j \in \lambda, j \neq i}
\hspace{-1mm} 
m^t_{(\lambda,j)}(x_{\lambda}), \\ \label{BPupdatefactor}
&m^{t+1}_{(\lambda,i)}(x_{\lambda})=\omega 
\prod_{\mu \ni i,\mu \neq \lambda} 
\hspace{-1mm}
m^t_{(i,\mu)}(x_{i}).
\end{split}
\end{equation*}

Beliefs are defined by
\begin{equation}
b_{i}(x_i)
:=
\omega
\prod_{\mu \ni i} 
m_{(i,\mu)}^{*}(x_i)
\end{equation}
and
\begin{equation}
b_{\lambda}(x_{\lambda})
:=\omega 
\psi_{\lambda}(x_{\lambda})
\prod_{j \in \lambda } 
m_{(\lambda,j)}^{*}(x_{\lambda}).
\end{equation}

The Bethe approximation of the partition function is given by
\begin{equation}
\begin{split}
\log Z_B := \sum_{\lambda \in {F}}\sum_{x_{\lambda}}
 b_{\lambda}(x_{\lambda})\log\psi_{\lambda}(x_{\lambda})   \\
 - \sum_{\lambda \in {F}}\sum_{x_{\lambda}} b_{\lambda}(x_{\lambda})\log
 b_{\lambda}(x_{\lambda}) \nonumber  \\
+ \sum_{i \in V}(d_i-1)\sum_{x_i}b_i(x_i)\log b_i(x_i). \label{BetheFactor}
\end{split}
\end{equation}

\subsection{Expansion of partition functions}
To state factor graph version of theorem \ref{thmloop}, we need a
little complicated notations.
For each hyperedge $\lambda \in F$ and $I \subset \lambda$, we introduce
a variable $\beta^{\lambda}_{I}$. We use convention that
$\beta^{\lambda}_{\phi}=1$ and $\beta^{\lambda}_{I}=0$ if $|I|=1$.

Theorem 1 is modified to the following identity:
\begin{equation}
\begin{split}
\sum_{\{x_i\}}
\prod_{\lambda \in F}
\sum_{I \subset \lambda}
\hspace{-0mm}
\beta^{\lambda}_{I}
(x_{i_1} \xi_{i_1}^{-x_{i_1}})
\cdots
(x_{i_k} \xi_{i_k}^{-x_{i_k}}) 
\prod_{i \in V}
\frac{\xi_{i}^{x_i}}{\xi_{i}+\xi_{i}^{-1}} \\ 
=
\sum_{s \subset E_{H}}
\prod_{\lambda \in F}
(-1)^{|I_{\lambda}(s)|}
\beta^{\lambda}_{I_{\lambda}(s)}
\prod_{i \in V} f_{d_i(s)}(\xi_{i}-\xi_{i}^{-1}), 
\end{split}\label{thm1mod}
\end{equation}
where $I=\{i_1,\ldots,i_k\}$ and 
$I_{\lambda}(s)$ is a set of variable nodes which connect to $\lambda$ by edges in $s$.

Lemma 1 is modified to
\begin{equation}
\frac{Z}{Z_B} =
\sum_{\{x_i\}}
\prod_{\lambda \in F}
\frac{b_{\lambda}(x_{\lambda})}{\prod_{i \in \lambda}b_{i}(x_i)}
\prod_{i \in V}
b_i(x_i) .
\end{equation}

\begin{thm}
For a factor graph model on $H$, we have
\vspace{0mm}
\begin{equation} \label{LSEfactor}
\hspace{-25mm}
Z=Z_B \sum_{s \subset E_{H}} r(s)
\end{equation}
\begin{equation*}
r(s):=
(-1)^{|s|}
\prod_{\lambda \in F}
\beta^{\lambda}_{I_{\lambda}(s)}
\prod_{i \in V} f_{d_i(s)}(\gamma_{i}).
\end{equation*}
\end{thm}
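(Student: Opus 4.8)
The plan is to mirror the three-step structure that proved Theorem~\ref{thmloop} in the pairwise case: establish a combinatorial identity (the factor-graph analogue of Theorem~1), establish a ratio formula (the analogue of Lemma~1), and then glue them together by a change of variables. The excerpt already states the two ingredients we need — the modified Theorem~1, equation~(\ref{thm1mod}), and the modified Lemma~1 — so the real content of the proof is the gluing step, exactly as in the proof of Theorem~\ref{thmloop}.

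First I would verify the combinatorial identity (\ref{thm1mod}) itself, since in the excerpt it is merely asserted. The left-hand side, after expanding each factor $\prod_{\lambda}\sum_{I\subset\lambda}$, becomes a sum over choices of a subset $I_\lambda\subset\lambda$ for each hyperedge; such a collection of subsets is precisely a subgraph $s\subset E_H$ of the bipartite graph $G_H$, with $I_\lambda(s)$ the variable-node neighbours of $\lambda$ in $s$. Collecting the $x_i$-dependence at each variable node $i$, node $i$ carries the monomial $(x_i\xi_i^{-x_i})^{d_i(s)}$ where $d_i(s)=\#\{\lambda: i\in I_\lambda(s)\}$ is its degree in $s$. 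Then the single-node sum
\begin{equation*}
\sum_{x_i=\pm1}(-x_i\xi_i^{-x_i})^{d_i(s)}\frac{\xi_i^{x_i}}{\xi_i+\xi_i^{-1}}
= f_{d_i(s)}(\xi_i-\xi_i^{-1})
\end{equation*}
is the same computation as in the proof of Theorem~1 (using (\ref{fn})), and the sign $(-1)^{|I_\lambda(s)|}$ on the right-hand side bookkeeps the $(-1)$ pulled out of each factor; since $\sum_\lambda|I_\lambda(s)|=\sum_i d_i(s)$ equals the number $|s|$ of edges in $s$, this also explains the global sign $(-1)^{|s|}$ appearing in $r(s)$.

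Next I would show that the factor-graph beliefs can be parametrized so that the summand of the modified Lemma~1 coincides with the summand of (\ref{thm1mod}). For each variable node $i$, write $b_i(x_i)=\xi_i^{x_i}/(\xi_i+\xi_i^{-1})$, which determines $\xi_i$ from $b_i$ and gives $\gamma_i=\xi_i-\xi_i^{-1}$ by (\ref{defgamma}), just as before. For each hyperedge $\lambda$, I would expand the ratio $b_\lambda(x_\lambda)/\prod_{i\in\lambda}b_i(x_i)$ in the monomial basis $\{\prod_{i\in I}x_i : I\subset\lambda\}$: because $x_i^2=1$, any function of $x_\lambda$ has a unique such expansion, and using $\prod_{i\in\lambda}b_i(x_i)=\prod_{i\in\lambda}\xi_i^{x_i}/(\xi_i+\xi_i^{-1})$ the ratio takes the form $\sum_{I\subset\lambda}\beta^\lambda_I\prod_{i\in I}(x_i\xi_i^{-x_i})$; the normalization $\sum_{x_\lambda}b_\lambda=1$ forces $\beta^\lambda_\phi=1$, and the local consistency $\sum_{x_\lambda\setminus\{i\}}b_\lambda=b_i$ forces $\beta^\lambda_{\{i\}}=0$ for every singleton, matching the stated conventions. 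With this dictionary the summand of the modified Lemma~1 is literally the summand on the left of (\ref{thm1mod}), so the two sums are equal term by term and (\ref{LSEfactor}) follows.

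The main obstacle I anticipate is the second step — pinning down that the $\beta^\lambda_I$ produced by expanding $b_\lambda/\prod b_i$ really satisfy the two conventions $\beta^\lambda_\phi=1$ and $\beta^\lambda_{\{i\}}=0$, and that no compatibility condition among the $\xi_i$ is needed (in the pairwise case one had to check $\beta'_{ij}=\beta_{ij}$ from (\ref{defbeta}); here there is a whole family $\{\beta^\lambda_I\}_{|I|\ge2}$ and one should at least observe that these are simply \emph{defined} by the expansion, so no consistency issue arises). Verifying the singleton vanishing is the crux: it is exactly the marginalization identity $\sum_{x_\lambda\setminus\{i\}}b_\lambda(x_\lambda)=b_i(x_i)$, which holds at an LBP fixed point, and it is what makes degree-one nodes of $s$ contribute $f_1=0$ so that only generalized loops survive. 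Everything else is the routine single-node summation already done in Theorem~1.
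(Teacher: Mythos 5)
Your proposal is correct and follows essentially the same route as the paper's (sketched) proof: parametrize each belief $b_\lambda$ so that $b_\lambda(x_\lambda)/\prod_{i\in\lambda}b_i(x_i)=\sum_{I\subset\lambda}\beta^\lambda_I\prod_{i\in I}x_i\xi_i^{-x_i}$, then combine the modified Theorem~1 identity with the modified Lemma~1. You in fact supply more detail than the paper does --- in particular the verification that normalization and marginal consistency force $\beta^\lambda_\phi=1$ and $\beta^\lambda_{\{i\}}=0$, and the sign bookkeeping $\sum_\lambda|I_\lambda(s)|=|s|$ --- all of which is correct.
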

\begin{proof}[Sketch of proof]
For a given $b_{\lambda}(x_{\lambda})$, we can choose 
$\{\xi_{i}\}_{i \in \lambda}$ and $\{ \beta^{\lambda}_{I}  \}_{I \subset
 \lambda, |I| \geq 2}$ to satisfy
\begin{equation*}
b_{\lambda}(x_{\lambda})=
\frac{1}{\prod_{i \in \lambda} (\xi_{i} + \xi_{i}^{-1})}
\sum_{I \subset \lambda}
\beta^{\lambda}_{I}
\prod_{i \in I} x_i
\prod_{j \in \lambda \setminus I} \xi_{j}^{x_j}.
\end{equation*}
The definition of $\gamma_{i}$ is the same as (\ref{defgamma}).

\end{proof}

In the summation (\ref{LSEfactor}), only generalized loops in $G_H$
contribute to the sum. Therefore, this expansion is again called loop series expansion.

\subsection{Expansion of marginals}
In the same way as section 3.2 we have the following theorem

\begin{thm}
\begin{equation}
\begin{split}
\hspace{-1mm}
&\frac{Z}{Z_B}
\hspace{-0.5mm}
\frac{p_{1}(+1)-p_{1}(-1)}{\sqrt{b_{1}(1)b_{1}(-1)}}\\
\hspace{-1mm}
&=
\hspace{-1mm}
\sum_{s \subset E_H}
\hspace{-1mm}
(-1)^{|s|}
\prod_{\lambda \in F}
\hspace{-0.7mm}
\beta^{\lambda}_{I_{\lambda}(s)}
\hspace{-3mm}
\prod_{i \in V \setminus \{1\}}
\hspace{-2.5mm}
 f_{d_i(s)}\hspace{-0.5mm}(\gamma_{i})
\hspace{1mm}
g_{d_1(s)}
\hspace{-0.5mm}
(\gamma_1). \label{thm6}
\end{split}
\end{equation}
\end{thm}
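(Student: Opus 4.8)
The plan is to repeat, in the factor-graph setting, exactly the two-stage argument that produced Theorem~\ref{thmmarginal} from Theorem~\ref{thmloop}. The starting point is the ``key fact'' analogous to the modified Lemma~1 used in the proof of Theorem~\ref{thmmarginal}, namely
\begin{equation*}
\frac{Z}{Z_B}\bigl(p_1(1)-p_1(-1)\bigr)
=
\sum_{\{x_i\}}
x_1
\prod_{\lambda \in F}
\frac{b_{\lambda}(x_{\lambda})}{\prod_{i \in \lambda}b_i(x_i)}
\prod_{i \in V} b_i(x_i).
\end{equation*}
This follows by the same manipulation as the factor-graph version of Lemma~1 already recorded in Section~5.3: writing out the normalization constants in the belief definitions, the product telescopes to $Z/Z_B$ times $\sum_{x} x_1 p(x) = p_1(1)-p_1(-1)$ up to the overall normalization, exactly as in the pairwise case.

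Next I would substitute the parametrization from the sketch of Theorem~5: for each $\lambda$ choose $\{\xi_i\}_{i\in\lambda}$ and $\{\beta^{\lambda}_I\}$ so that $b_{\lambda}(x_{\lambda}) = \bigl(\prod_{i\in\lambda}(\xi_i+\xi_i^{-1})\bigr)^{-1}\sum_{I\subset\lambda}\beta^{\lambda}_I \prod_{i\in I}x_i \prod_{j\in\lambda\setminus I}\xi_j^{x_j}$, with $b_i(x_i) = \xi_i^{x_i}/(\xi_i+\xi_i^{-1})$ and $\gamma_i = \xi_i - \xi_i^{-1}$ as in (\ref{defgamma}). With this substitution the ratio $b_{\lambda}(x_{\lambda})/\prod_{i\in\lambda}b_i(x_i)$ becomes $\sum_{I\subset\lambda}\beta^{\lambda}_I \prod_{i\in I}(x_i\xi_i^{-x_i})$, so the right-hand side of the key fact is precisely the left-hand side of (\ref{thm1mod}) with an extra factor $x_1$ inserted. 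Hence what I need is the $x_1$-weighted analogue of identity (\ref{thm1mod}), i.e.\ the factor-graph lift of Lemma~\ref{lem2}.

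That analogue is obtained by the same bookkeeping as in the proof of (\ref{thm1mod}): expand each $\prod_\lambda \sum_{I\subset\lambda}(\cdots)$, collect the monomial in each $x_i$ coming from all edges of the resulting subgraph $s\subset E_H$ incident to $i$, and perform the sum over $x_i=\pm1$ node by node. For every node $i\neq 1$ this sum is, by (\ref{fn}), the factor $f_{d_i(s)}(\xi_i-\xi_i^{-1})$ (together with the sign $(-1)^{|I_\lambda(s)|}$ grouped as $(-1)^{|s|}$ over all $\lambda$, since $\sum_\lambda |I_\lambda(s)| = |s|$). For node $1$ the extra $x_1$ changes the one-variable sum to $\sum_{x_1=\pm1} x_1(-x_1\xi_1^{-x_1})^{n}\xi_1^{x_1}/(\xi_1+\xi_1^{-1})$, which by the computation in the proof of Lemma~\ref{lem2} equals $g_n(\xi_1-\xi_1^{-1})/(\xi_1+\xi_1^{-1})$ with $n=d_1(s)$. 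Multiplying back the normalization $\prod_i(\xi_i+\xi_i^{-1})^{-1}$ cancels the $(\xi_1+\xi_1^{-1})^{-1}$ against the leftover factor from $b_1$, and gives exactly (\ref{thm6}).

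I do not expect a genuine obstacle here; the only point requiring care is the sign/degree bookkeeping $\sum_{\lambda\in F}|I_\lambda(s)| = |s|$ and the verification that $g_{d_1(s)}$ replaces $f_{d_1(s)}$ at the single distinguished node while nothing else changes — both are direct transcriptions of steps already carried out for Theorem~\ref{thmmarginal} and Theorem~5. The routine part is simply confirming that the $\beta^{\lambda}_I$ chosen from $b_\lambda$ are consistent across the shared variable nodes, which is exactly the consistency already invoked (without full proof) in the sketch of Theorem~5.
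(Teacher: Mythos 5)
Your proposal is correct and is exactly the argument the paper intends: the paper gives no proof of this theorem beyond the remark that it follows ``in the same way as section 3.2,'' and you have filled in precisely that route --- the factor-graph version of Lemma~1 with the extra $x_1$, the parametrization of $b_\lambda$ from the sketch of Theorem~5, and the single-node computation $\sum_{x_1}x_1(-x_1\xi_1^{-x_1})^n\xi_1^{x_1}/(\xi_1+\xi_1^{-1})=g_n(\xi_1-\xi_1^{-1})/(\xi_1+\xi_1^{-1})$ from Lemma~3, with the sign bookkeeping $\sum_\lambda|I_\lambda(s)|=|s|$ handled correctly. No gaps.
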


\section{Properties of $\theta_{G}$}
In the rest of this article, we will focus on the (Laurent) polynomial
of $\theta_{G}$.
The accuracy of the Bethe approximation depends both on the graph
topology and strength of the interactions;
the formula in theorem 2 displays this fact.
To exploit the graph topology for the analysis of the performance of the
LBP algorithm, we need sophisticated techniques.
One of the techniques is graph polynomials.
Graph polynomials have long history since Birkhoff introduced the
chromatic polynomial \cite{Birkhoff} and Tutte generalized it 
to the Tutte polynomial \cite{Tutte}.

\subsection{Contraction-Deletion relation}
In this subsection we explain that the function $\theta_{G}$ admits the
contraction-deletion relation. 
If a function from graph satisfies the contraction-deletion relation and 
multiplicativity for disjoint unions of graphs, 
such a function is called the Tutte's V-function \cite{Tutte}.
Though the Tutte polynomial is the most famous example of the V-functions,
$\theta_{G}$ is another interesting example of the V-functions.

The graph which is obtained by deleting an edge $e \in E$ 
is denoted by $G \backslash e$. 
The graph which is obtained by contracting $e$ is denoted by $G/e$. 
In this article, the operations of the
contraction and the deletion are only applied to the non loop
edges.
Note that an edge $e \in E$ is called a loop if both ends of $e$ are
connected to the same node.

The following formula of $f_n(x)$ is essential in the proof
of the contraction-deletion relation.
\begin{lem}\label{lemf}
 ${}^\forall n,m \in \mathbb{N}$ 

$f_{n+m-2}(x)= f_n(x)f_m(x)+f_{n-1}(x)f_{m-1}(x)$
\end{lem}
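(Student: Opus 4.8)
The plan is to prove the identity $f_{n+m-2}(x) = f_n(x)f_m(x) + f_{n-1}(x)f_{m-1}(x)$ by induction, most naturally by induction on $m$ with $n$ held fixed (the statement is symmetric in $n$ and $m$, so this loses nothing). First I would dispose of the base cases $m=0$ and $m=1$. For $m=1$, using $f_1(x)=0$ and $f_0(x)=1$, the right-hand side becomes $f_n(x)\cdot 0 + f_{n-1}(x)\cdot 1 = f_{n-1}(x)$, and the left-hand side is $f_{n-1}(x)$, so this holds. For $m=0$ one must be slightly careful because the recursion $f_{n+1}=xf_n+f_{n-1}$ only defines $f_n$ for $n\ge 0$; the cleanest route is to check $m=2$ directly instead, where the right-hand side is $f_n(x)f_2(x)+f_{n-1}(x)f_1(x) = f_n(x)\cdot x + 0 = xf_n(x)$ and the left-hand side is $f_n(x)$... which forces us to interpret $f_{n}$ correctly, so in fact it is better to take $m=1$ and $m=2$ as the two base cases: $m=2$ gives LHS $=f_n(x)$ and RHS $=xf_n(x)+f_{n-1}(x)\cdot 0$, which does \emph{not} match unless $f_2(x)=1$; indeed $f_2(x)=xf_1(x)+f_0(x)=1$, so RHS $=f_n(x)\cdot 1 + f_{n-1}(x)\cdot 0 = f_n(x)$. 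Good.

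For the inductive step, assume the identity holds for $m$ and for $m-1$ (two-step induction, matching the order of the recursion for $f$). Then
\begin{equation*}
\begin{split}
f_n(x)f_{m+1}(x) + f_{n-1}(x)f_m(x)
&= f_n(x)\bigl(xf_m(x)+f_{m-1}(x)\bigr) + f_{n-1}(x)f_m(x) \\
&= x\bigl(f_n(x)f_m(x)+f_{n-1}(x)f_{m-1}(x)\bigr) \\
&\quad {}+ \bigl(f_n(x)f_{m-1}(x)+f_{n-1}(x)f_{m-2}(x)\bigr) \\
&= x f_{n+m-2}(x) + f_{n+m-3}(x) = f_{n+m-1}(x),
\end{split}
\end{equation*}
where the first grouping uses the recursion for $f_{m+1}$, the regrouping after that just rearranges terms, the penultimate equality applies the induction hypothesis at $m$ and at $m-1$, and the final equality is the defining recursion $f_{(n+m-1)+1-1}$, i.e. $f_{n+m-1}=xf_{n+m-2}+f_{n+m-3}$. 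Since $n+(m+1)-2 = n+m-1$, this is exactly the claim at $m+1$.

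An alternative, and arguably slicker, proof uses the closed form \eqref{fn}: writing $f_n(\xi-\xi^{-1}) = \frac{\xi^{n-1}-(-\xi)^{-n+1}}{\xi+\xi^{-1}}$, one simply expands the product $f_n f_m + f_{n-1}f_{m-1}$ over the common denominator $(\xi+\xi^{-1})^2$, collects the four cross terms in the numerator, and observes that the cross terms combine (using $(-\xi)^{-n+1}(-\xi)^{-m+1} = (-\xi)^{-(n+m-2)}$ and similar) so that the numerator telescopes to $(\xi+\xi^{-1})\bigl(\xi^{n+m-3}-(-\xi)^{-(n+m-3)}\bigr)$, which after cancelling one factor of $\xi+\xi^{-1}$ is exactly $(\xi+\xi^{-1})f_{n+m-2}(\xi-\xi^{-1})$. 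Since the identity is polynomial and $\xi-\xi^{-1}$ takes infinitely many values, it holds identically in $x$. I expect the main obstacle to be purely bookkeeping: making sure the index shifts are consistent (especially the ``$-2$'' in $f_{n+m-2}$) and handling the small-index base cases $f_0,f_1,f_2$ without tripping over the fact that $f_{-1}$ is not defined by the recursion. No conceptual difficulty is anticipated; the induction is routine once the base cases are pinned down.
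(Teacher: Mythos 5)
Your proof is correct. Note that the paper states Lemma~\ref{lemf} without any proof at all, so there is nothing to compare against; your argument fills a gap rather than duplicating one. The two-step induction is sound: the base cases $m=1$ (giving $f_{n-1}$ on both sides) and $m=2$ (using $f_2(x)=1$) are verified correctly, and the inductive step closes because $n+(m+1)-2=n+m-1$ and $f_{n+m-1}=xf_{n+m-2}+f_{n+m-3}$. One small inaccuracy in your write-up: the line you describe as ``just rearranges terms'' in fact applies the recursion a second time, replacing $f_{n-1}(x)f_m(x)$ by $f_{n-1}(x)\bigl(xf_{m-1}(x)+f_{m-2}(x)\bigr)$; the equality is valid, but it is not a mere rearrangement. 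Your alternative argument via the closed form \eqref{fn} is also correct and is arguably the one most in the spirit of the paper, since \eqref{fn} is exactly what the authors derive and use in the proof of Theorem~1; there the cross terms do not merely ``combine'' but cancel identically, because $-(-1)^{m-1}-(-1)^{m-2}=0$, leaving $(\xi+\xi^{-1})\bigl(\xi^{n+m-3}-(-\xi)^{-(n+m-3)}\bigr)$ in the numerator as you claim. Finally, your caution about small indices is warranted: as stated with ``$\forall n,m\in\mathbb{N}$'' the identity involves $f_{-1}$ when $n=0$ or $m=0$, so the lemma should be read with $n,m\ge 1$ (or with $f_{-1}(x):=-x$ obtained by running the recursion backwards); in the application to the contraction--deletion relation only $n,m\ge 1$ occurs, since both endpoints of a contracted edge $e\in s$ have positive degree in $s$.
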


\begin{thm} \label{thmcd}
For a non loop edge $e \in E$
\begin{equation}
 \theta_{G}(\beta,\xi)=
(1-\beta) \theta_{G\backslash e}(\beta,\xi) +
\beta  \theta_{G/e}(\beta,\xi)
\end{equation}
\end{thm}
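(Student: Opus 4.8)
Looking at this, I need to prove the contraction-deletion relation for $\theta_G$. Let me think about how the sum over subsets $s \subset E$ splits based on whether a fixed edge $e$ is in $s$ or not, and how that relates to $G \backslash e$ and $G/e$.

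The plan is to split the defining sum
$\theta_G(\beta,\xi)=\sum_{s\subset E}\beta^{|s|}\prod_{i\in V}f_{d_i(s)}(x)$,
where I abbreviate $x:=\xi-\xi^{-1}$, according to whether the fixed non-loop edge $e=uv$ lies in the subset $s$ or not, and then to match the two resulting pieces against $\theta_{G\backslash e}$ and $\theta_{G/e}$ using Lemma~\ref{lemf}. First I would handle the subsets $s$ with $e\notin s$: here $s\subset E\setminus\{e\}$ and $d_i(s)$ is the same computed in $G$ or in $G\backslash e$, so
\[
\sum_{s\subset E,\ e\notin s}\beta^{|s|}\prod_{i\in V}f_{d_i(s)}(x)=\theta_{G\backslash e}(\beta,\xi).
\]

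Next, for the subsets containing $e$ I would write $s=s'\cup\{e\}$ with $s'\subset E\setminus\{e\}$. Then $|s|=|s'|+1$, the degrees at $u$ and $v$ each increase by one, and all other degrees are unchanged, so this part of the sum equals
\[
\beta\sum_{s'\subset E\setminus\{e\}}\beta^{|s'|}\,f_{d_u(s')+1}(x)\,f_{d_v(s')+1}(x)\!\!\prod_{i\in V\setminus\{u,v\}}\!\! f_{d_i(s')}(x).
\]
Now I would apply Lemma~\ref{lemf} with $n=d_u(s')+1$ and $m=d_v(s')+1$, which reads $f_n(x)f_m(x)=f_{n+m-2}(x)-f_{n-1}(x)f_{m-1}(x)$, i.e.
\[
f_{d_u(s')+1}(x)\,f_{d_v(s')+1}(x)=f_{d_u(s')+d_v(s')}(x)-f_{d_u(s')}(x)\,f_{d_v(s')}(x),
\]
thereby splitting the $e\in s$ contribution into two sums over $s'\subset E\setminus\{e\}$.

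It remains to recognize these two sums. Identifying $E\setminus\{e\}$ with the edge set of $G/e$ (a multigraph in general), the node $w$ obtained by merging $u$ and $v$ has degree $d_u(s')+d_v(s')$ in $s'$ while every other node keeps its degree, so the first sum is exactly $\theta_{G/e}(\beta,\xi)$; the second sum, in which $f_{d_u(s')}f_{d_v(s')}$ restores the original degrees at $u$ and $v$, is exactly $\theta_{G\backslash e}(\beta,\xi)$. Collecting the $e\notin s$ and $e\in s$ parts gives
\[
\theta_G(\beta,\xi)=\theta_{G\backslash e}(\beta,\xi)+\beta\bigl(\theta_{G/e}(\beta,\xi)-\theta_{G\backslash e}(\beta,\xi)\bigr)=(1-\beta)\theta_{G\backslash e}(\beta,\xi)+\beta\,\theta_{G/e}(\beta,\xi).
\]

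The single nontrivial algebraic input is Lemma~\ref{lemf}; everything else is bookkeeping. The step I expect to require the most care is exactly that bookkeeping under contraction: checking that contracting a non-loop edge yields a well-defined (multi)graph on $|V|-1$ nodes, that the natural bijection $E\setminus\{e\}\leftrightarrow E(G/e)$ carries the subgraph induced by $s'$ to the subgraph induced by $s'$, and that $d_w(s')=d_u(s')+d_v(s')$ with all other degrees preserved — after which the three sums line up termwise and the identity is immediate.
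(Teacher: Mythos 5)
Your proof is correct and follows exactly the strategy the paper sketches: split the sum over $s\subset E$ according to whether $e\in s$, then apply Lemma~\ref{lemf} with $n=d_u(s')+1$, $m=d_v(s')+1$ to the terms containing $e$. Your version simply fills in the bookkeeping (the degree identity $d_w(s')=d_u(s')+d_v(s')$ under contraction and the sign rearrangement) that the paper's two-line sketch omits.
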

\begin{proof}[Sketch of proof]
Classify $s$ in (\ref{deftheta}) if $s$ include $e$ or not.
Then apply lemma $\ref{lemf}$ for $s \ni e$.
\end{proof}

\subsection{The case of $\xi=\sqrt{-1}$ }
At the point of $\xi=\sqrt{-1}$, $\theta_{G}$ has special and interesting properties.
From (\ref{lem2p1}), $\theta_G(1,\sqrt{-1})=0$.
The following theorem asserts that $\theta(\beta,\sqrt{-1})$ can be divided by
$(1-\beta)$ at $|E|-|V|$ times.
\begin{thm} 
\begin{equation}
 \omega_{G}(\beta):= \frac{\theta_{G}(\beta,\sqrt{-1})}{(1-\beta)^{|E|-|V|}} \quad
 \in \mathbb{Z}[\beta]
\end{equation}
\end{thm}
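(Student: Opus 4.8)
The plan is to induct on the number of independent cycles $n(G) = |E|-|V|+1$ using the contraction--deletion relation of Theorem~\ref{thmcd}, evaluated at $\xi = \sqrt{-1}$. Write $\theta_G(\beta) := \theta_G(\beta,\sqrt{-1})$ for brevity. The base case is $n(G)=0$, i.e.\ $G$ is a tree (or more generally a forest, but $G$ is connected so a tree): then there are no generalized loops other than $s=\phi$, so $\theta_G(\beta) = \prod_{i\in V} f_{d_i(\phi)}(\sqrt{-1}-\sqrt{-1}^{-1}) = \prod_{i} f_0(\cdot) = 1$, and $|E|-|V| = -1$; one must check that the exponent $-1$ is harmless. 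Actually the cleaner base case is a graph with exactly one cycle, $n(G)=1$, where $|E|-|V|=0$ and the claim is simply $\omega_G(\beta) = \theta_G(\beta,\sqrt{-1}) \in \mathbb{Z}[\beta]$, which holds because $\theta_G$ has integer coefficients by the Definition following (\ref{deftheta}). So I will take $n(G)\ge 1$ as the base, handling the tree case separately (or noting the statement is really intended for $n(G)\ge 1$).

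For the inductive step, suppose the claim holds for all connected graphs with fewer independent cycles. Given $G$ with $n(G)\ge 2$, since $G$ is not a tree it contains a cycle, hence a non-loop edge $e$ lying on a cycle (if every cycle were a single loop edge we could still pick a non-loop edge as long as $G$ is not just loops attached to a tree; the genuinely relevant case is $e$ on a cycle). For such an edge, $G\backslash e$ is connected with $n(G\backslash e) = n(G)-1$, and $G/e$ is connected with $n(G/e) = n(G)-1$ as well (contracting an edge on a cycle kills one independent cycle). Also $|E_{G\backslash e}| - |V_{G\backslash e}| = (|E|-1) - |V| = (|E|-|V|) - 1$, and $|E_{G/e}| - |V_{G/e}| = (|E|-1) - (|V|-1) = |E|-|V|$. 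By Theorem~\ref{thmcd},
\begin{equation*}
\theta_G(\beta) = (1-\beta)\,\theta_{G\backslash e}(\beta) + \beta\,\theta_{G/e}(\beta).
\end{equation*}
By the inductive hypothesis, $\theta_{G\backslash e}(\beta) = (1-\beta)^{|E|-|V|-1}\,\omega_{G\backslash e}(\beta)$ and $\theta_{G/e}(\beta) = (1-\beta)^{|E|-|V|}\,\omega_{G/e}(\beta)$ with both $\omega$'s in $\mathbb{Z}[\beta]$. Substituting,
\begin{equation*}
\theta_G(\beta) = (1-\beta)^{|E|-|V|}\bigl(\omega_{G\backslash e}(\beta) + \beta\,\omega_{G/e}(\beta)\bigr),
\end{equation*}
so $\omega_G(\beta) = \omega_{G\backslash e}(\beta) + \beta\,\omega_{G/e}(\beta) \in \mathbb{Z}[\beta]$, completing the induction.

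The main obstacle is bookkeeping at the boundary of the induction: one must choose the contracted edge $e$ so that \emph{both} $G\backslash e$ and $G/e$ stay within the scope of the inductive hypothesis with the correct cycle rank, and in particular so that the exponent $|E|-|V|$ never becomes negative at a point where we need $\omega_{G\backslash e}$ to be a genuine polynomial. Picking $e$ on a cycle guarantees $n(G\backslash e) = n(G)-1 \ge 0$, which is exactly what keeps $|E_{G\backslash e}|-|V_{G\backslash e}| = n(G)-2 \ge -1$ under control; when $n(G)=1$ this forces us to stop the recursion and invoke integrality of $\theta$ directly rather than recursing into a tree. A secondary point worth a line in the writeup: Theorem~\ref{thmcd} requires $e$ to be a non-loop edge, and contracting an edge can create loops, so one should confirm that the recursion only ever needs to contract/delete non-loop edges — which is fine since we always select $e$ on a cycle of length $\ge 2$, and such an edge is never a loop. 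Everything else (the exponent arithmetic, the integer-coefficient closure) is routine.
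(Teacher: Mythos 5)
Your derivation of the recursion $\omega_{G}(\beta)=\omega_{G\backslash e}(\beta)+\beta\,\omega_{G/e}(\beta)$ from Theorem~\ref{thmcd} is exactly the paper's first step, and the exponent bookkeeping there is correct. The gap is in the induction itself: you induct on $n(G)=|E|-|V|+1$ and assert that $n(G/e)=n(G)-1$, but contracting a non-loop edge removes one edge \emph{and} one vertex, so $n(G/e)=(|E|-1)-(|V|-1)+1=n(G)$ --- your own computation ``$|E_{G/e}|-|V_{G/e}|=|E|-|V|$'' already says so. Hence the inductive hypothesis (``fewer independent cycles'') does not apply to $G/e$, and the induction does not close. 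The correct induction variable is the number of (non-loop) edges, and once you induct on that you are forced to confront the terminal objects of the recursion, namely connected graphs all of whose edges are loops: bouquet graphs $B_L$. These have $n(B_L)=L$, which can be arbitrarily large, so they are not covered by your base case $n(G)\le 1$, and no non-loop edge is available to recurse on. This is precisely why the paper's proof supplies the explicit base-case evaluation $\omega_{B_L}(\beta)=1+(2L-1)\beta$ and then closes the induction with the same recursion you derived.

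Two smaller points. First, your $n(G)=1$ base case claims $\theta_{G}(\beta,\sqrt{-1})\in\mathbb{Z}[\beta]$ ``because $\theta_G$ has integer coefficients''; but $\theta_G$ is a Laurent polynomial in $\xi$ over $\mathbb{Z}$, so its value at $\xi=\sqrt{-1}$ a priori lies only in $\mathbb{Z}[\sqrt{-1}\,][\beta]$. Integrality does hold, but it needs the parity observation that $f_n$ has the parity of $n$ (so $f_n(2\sqrt{-1})$ is imaginary exactly when $n$ is odd) together with the handshake fact that every edge subset $s$ has an even number of odd-degree nodes; this is worth a line, and in any case becomes unnecessary once the induction terminates at bouquets. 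Second, your worry about deletion versus bridges is well placed but moot in your scheme (you pick $e$ on a cycle, so $G\backslash e$ stays connected); the real obstruction, as above, is that contraction never lowers the cycle rank.
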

\begin{proof}
Theorem \ref{thmcd} and definition of $\omega_{G}$ imply that
\begin{equation}
\omega_{G}(\beta)=\omega_{G \backslash e}(\beta)+\beta
 \omega_{G/e}(\beta).  \label{thmomega1}
\end{equation}
For a bouquet graph $B_L$, which has a single node and $L$ loops, we
 can easily check that
\begin{equation}
\omega_{B_L}(\beta)=1+(2L-1)\beta.  \label{thmomega2}
\end{equation}
We can show the assertion inductively by (\ref{thmomega1}) and (\ref{thmomega2}).
\end{proof}

The value of $\omega_{G}$ at $\beta=1$ has a combinatorial interpretation.
\begin{thm}
If the graph $G$ does not have loop edge,
\begin{equation*}
\begin{split}
\omega_{G}(1)
&= \#\{\psi: V \rightarrow E ;\psi \quad \rm{satisfies} \quad \rm{(c1),(c2)} \}.
\end{split}
\end{equation*}
The condition $\rm{(c1)}$ is that $\psi$ is injective and $(\rm{c2})$ is that for all $i \in
 V$ $\psi (i)$ is one of a
connecting edges to $i$.
\end{thm}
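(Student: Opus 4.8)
The plan is to show that the right‑hand side, once extended from simple graphs to multigraphs with loops, obeys exactly the same contraction--deletion recursion and base cases as $\omega_{G}$, and then to reuse the induction that already proved $\omega_{G}\in\mathbb{Z}[\beta]$. First I would extend the count: for an arbitrary graph $H=(V,E)$ (possibly with multiple edges or loops) let $\widetilde{M}(H)$ be the number of maps sending each node $i$ to a \emph{directed} edge with tail $i$, such that the underlying undirected edges used are pairwise distinct. A non‑loop edge at $i$ has exactly one admissible direction, so for loopless $H$ this coincides with the right‑hand side of the theorem (the direction data is vacuous); a loop at $i$ has two, so $\widetilde{M}(B_{L})=2L$, matching (\ref{thmomega2}) at $\beta=1$. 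Also $\widetilde{M}$ is multiplicative over disjoint unions (a node can only use an edge in its own component) and $\widetilde{M}(\text{isolated vertex})=0$.

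The key step is the identity $\widetilde{M}(H)=\widetilde{M}(H\backslash e)+\widetilde{M}(H/e)$ for a non‑loop edge $e=ij$. I would classify the maps by whether the undirected edge $e$ is used: if not, the map is precisely one for $H\backslash e$; if so, exactly one of $i,j$ uses $e$, and deleting that assignment yields a map for $H/e$ in which the merged vertex $w$ inherits the assignment of the other endpoint. The only delicate point is an edge $f$ parallel to $e$: in $H/e$ it becomes a loop at $w$, and its two admissible directions (both with tail $w$) are used to record which of $i$ or $j$ owned $f$ in $H$ — i.e.\ whether the configuration arose from ``$i$ uses $e$, $j$ uses $f$'' or ``$j$ uses $e$, $i$ uses $f$.'' One checks this is a bijection; loops already at $i$ or $j$, and ordinary edges at $i$ or $j$, transfer transparently.

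Finally I would induct on the number of non‑loop edges. In the base case ($H$ has no non‑loop edge) $H$ is a disjoint union of bouquets and isolated vertices, and both $\omega_{H}(1)$ and $\widetilde{M}(H)$ factor and agree factorwise by (\ref{thmomega2}) — noting that for an isolated vertex $\omega_{H}(\beta)=(1-\beta)$ by the definition of $\omega_{G}$ with $|E|-|V|=-1$, hence $\omega_{H}(1)=0=\widetilde{M}(H)$. For the inductive step, pick a non‑loop edge $e$; the recursion (\ref{thmomega1}) at $\beta=1$ gives $\omega_{G}(1)=\omega_{G\backslash e}(1)+\omega_{G/e}(1)$, $\widetilde{M}$ satisfies the matching recursion, and $G\backslash e$, $G/e$ have strictly fewer non‑loop edges. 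Deleting a bridge may disconnect $G$, but both $\omega_{G}$ (being $\theta_{G}$, a multiplicative V‑function, divided by $(1-\beta)^{|E|-|V|}$) and $\widetilde{M}$ are multiplicative over disjoint unions, so the induction goes through. Hence $\omega_{G}(1)=\widetilde{M}(G)$ for every graph, and for the loopless $G$ of the statement $\widetilde{M}(G)$ is exactly the number of injective $\psi:V\to E$ with $\psi(i)$ incident to $i$.

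The step I expect to be the main obstacle is the contraction bookkeeping once loops are present: although the theorem concerns loopless $G$, the induction unavoidably passes through multigraphs with loops, and one must pin down precisely how the two directions of a loop in $H/e$ encode ownership in $H$ so that the recursion for $\widetilde{M}$ holds exactly, and then verify the degenerate cases (isolated vertices, bridges, and parallel edges collapsing to loops). A more direct alternative — expanding the left side of Theorem~1 at $\xi=\sqrt{-1}$, where each edge factor degenerates to $1-\beta$ — runs into an $\varepsilon^{-|V|}$ blow‑up from the vertex factors and appears messier than the deletion--contraction route.
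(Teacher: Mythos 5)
Your proposal is correct, and since the paper explicitly omits its proof of this theorem there is nothing to compare it against; on its own terms the argument is sound and fits naturally with the machinery the paper does provide. The deletion--contraction route is exactly the right one: the recursion $\omega_{G}(1)=\omega_{G\backslash e}(1)+\omega_{G/e}(1)$ follows from (\ref{thmomega1}), it holds for disconnected graphs as well since the definition (\ref{deftheta}) of $\theta_G$ and the exponent $|E|-|V|$ are both multiplicative/additive over disjoint unions, and your extension of the count to directed edges on multigraphs is precisely what is needed to make the combinatorial side satisfy the same recursion --- the two orientations of a loop created by contracting $e$ encode which endpoint of $e$ owned the parallel edge, which is the one genuinely delicate point and you handle it correctly (I checked the bijection, including injectivity when the two symmetric configurations ``$i$ uses $e$, $j$ uses $f$'' and ``$j$ uses $e$, $i$ uses $f$'' collapse onto the two directions of the loop $f$ at the merged vertex). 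The base cases $\widetilde{M}(B_L)=2L=\omega_{B_L}(1)$ from (\ref{thmomega2}) and $\omega(1)=0$ for an isolated vertex are right, and the induction on the number of non-loop edges terminates because contraction of a non-loop edge strictly decreases that count. Your closing remark is also apt: the ``direct'' evaluation at $\xi=\sqrt{-1}$ degenerates because $\xi+\xi^{-1}=0$ there, so the deletion--contraction argument is the cleaner path. A quick sanity check on the triangle ($\omega_G(1)=2$, two systems of distinct representatives) confirms the statement and your recursion.
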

We omit the proof of this theorem.
The assumption that the graph is loop-less is not essential.

\subsection{Relation between $\omega_{G}$ and matching polynomials}
The polynomial $\omega_{G}(\beta)$, introduced in the previous section,
is closely related to the matching polynomial.

A matching of $G$ is a set of edges in which any edges does not occupy a
same node. If a matching consists of $k$ edges, it is called k-matching.    
Let $p_{G}(k)$ be the number of k-matchings of $G$. 
The matching polynomial $\alpha_{G}$ is defined by
\begin{equation}
\alpha_{G}(x)= \sum_{k=0}^{[n/2]}(-1)^{k}
p_{G}(k) x^{n-2k}.
\end{equation}

We introduce two square matrices indexed by $V$.
Let $\mathcal{A}$ be a adjacency matrix of the graph $G$ and
let $\mathcal{D}$ be a diagonal matrix called degree
 matrix. 
In other words, for a function $\phi : V \rightarrow \mathbb{R}$,
\begin{equation}
(\mathcal{A} \phi )(i)=\sum_{j \sim i} \phi(j), \quad
(\mathcal{D} \phi )(i)=\textrm{deg}(i) \hspace{1mm}\phi(i).
\end{equation}

\begin{thm}\label{thmomega}

\begin{equation}
\omega_{G}
(u^{2})
=
\hspace{-3mm}
\sum_{C: cycles}
\hspace{-2mm}
2^{k(C)}
\det[
I
\hspace{-0.5mm}
+
\hspace{-0.5mm}
u^{2}(\mathcal{D}\hspace{-0.5mm}-\hspace{-0.5mm}I)
\hspace{-0.5mm}-u \mathcal{A}
]
\Big|_{G\setminus C}
\hspace{0.8mm}
u^{|C|},
\end{equation}
where
 $\cdot \big|_{G \setminus C}$ denotes a restriction
 to the principal minor of $G \setminus C$.
The summation runs over all node disjoint cycles, i.e.
2-regular edge induced subgraphs.
The number of connected components of $C$ is denoted by $k(C)$
\end{thm}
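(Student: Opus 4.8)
The plan is to evaluate the auxiliary polynomials $f_n$ at the special argument, obtain a transparent closed form for $\theta_G(\beta,\sqrt{-1})$, recognise $\omega_G(u^2)$ as a degree‑weighted matching polynomial, and finally unfold that matching polynomial into the asserted sum over node‑disjoint cycles by a Sachs‑type (Leibniz) expansion of the determinants combined with a sign cancellation. Write $\iota=\sqrt{-1}$, so that $\xi-\xi^{-1}=2\iota$ at $\xi=\iota$. From $f_0=1$, $f_1=0$ and $f_{n+1}=xf_n+f_{n-1}$ one checks by a one‑line induction that $f_n(2\iota)=(1-n)\,\iota^{\,n}$ for all $n\ge 0$. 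Substituting into \eqref{deftheta} at $\xi=\iota$ and using $\sum_{i\in V}d_i(s)=2|s|$, the combined phase $\prod_{i\in V}\iota^{\,d_i(s)}=\iota^{\,2|s|}=(-1)^{|s|}$ factors out, giving the clean identity
\begin{equation*}
\theta_G(\beta,\sqrt{-1})=\sum_{s\subset E}(-\beta)^{|s|}\prod_{i\in V}\bigl(1-d_i(s)\bigr),
\end{equation*}
in which the factor $1-d_i(s)$ vanishes exactly when $d_i(s)=1$, consistent with the generalized‑loop condition.

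\emph{Identification with a weighted matching polynomial.} Put $\beta=u^2$ and define $\mu_G(u):=\sum_{M}(-u^2)^{|M|}\prod_{i\notin V(M)}\bigl(1+u^2(d_i-1)\bigr)$, the sum running over all matchings $M$ of $G$, with $d_i$ the degree in $G$. I would show $\theta_G(\beta,\sqrt{-1})=(1-\beta)^{|E|-|V|}\mu_G(u)$, i.e.\ $\omega_G(u^2)=\mu_G(u)$. Since $\omega_G$ is pinned down by the recursion \eqref{thmomega1}, by multiplicativity over disjoint unions (inherited from $\theta_G$ and $(1-\beta)^{|E|-|V|}$), and by its bouquet values \eqref{thmomega2}, it suffices to verify the same three properties for $(1-\beta)^{|E|-|V|}\mu_G$, extended to multigraphs (loops never lie in a matching but contribute $2$ to a degree, so $\mu_{B_L}(u)=1+(2L-1)u^2$, and one computes $\theta_{B_L}(\beta,\sqrt{-1})=(1-\beta)^{L-1}(1+(2L-1)\beta)$, matching). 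Multiplicativity and the bouquet values are immediate; the substantive point is the deletion–contraction identity $\mu_G=\mu_{G\setminus e}+u^2\,\mu_{G/e}$ for a non‑loop edge $e=ij$, which I would prove by splitting the matchings of $G$ according to whether they use $e$ and carefully reorganising the degree‑dependent uncovered‑vertex factors. (An alternative route to this identification is to expand each $1-d_i(s)$ over the edges incident to $i$, interchange summation, and remove by a sign‑reversing involution those configurations in which an edge is picked at both of its ends, after which the surviving terms reassemble into $(1-\beta)^{|E|-|V|}\mu_G$.)

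\emph{Unfolding $\mu_G$ into the cycle sum.} Expand each determinant on the right‑hand side of the asserted formula by the Leibniz rule over $V\setminus V(C)$ and decompose each permutation into fixed points, transpositions, and longer cycles. A fixed point $i$ contributes the diagonal entry $1+u^2(d_i-1)$; a transposition along an edge contributes $(-u)(-u)\cdot(-1)=-u^2$; a cycle of length $k\ge 3$, summed over its two orientations, contributes $2(-1)^{k-1}(-u)^k=-2u^{k}$. Hence $\det[I+u^2(\mathcal{D}-I)-u\mathcal{A}]\big|_{G\setminus C}$ is a sum over vertex‑disjoint configurations of edges and cycles in $G\setminus V(C)$. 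Combining with the outer summation over node‑disjoint cycle systems $C$, whose weight is $\prod_{C'}2u^{|C'|}$, each cycle appearing in a total configuration may be assigned either to $C$ (contributing $+2u^{|C'|}$) or to the determinant's cycle part (contributing $-2u^{|C'|}$); summing over this binary choice produces the factor $2u^{|C'|}+(-2u^{|C'|})=0$ for every cycle present. Thus all configurations containing at least one cycle cancel, and what remains is precisely $\sum_{M}(-u^2)^{|M|}\prod_{i\notin V(M)}(1+u^2(d_i-1))=\mu_G(u)$. Together with the previous step this yields $\omega_G(u^2)=\mu_G(u)=\sum_{C}2^{k(C)}u^{|C|}\det[I+u^2(\mathcal{D}-I)-u\mathcal{A}]\big|_{G\setminus C}$.

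\emph{Main obstacle.} The crux is the deletion–contraction identity $\mu_G=\mu_{G\setminus e}+u^2\,\mu_{G/e}$ (equivalently, the combinatorial bridge between the subgraph‑sum form of $\theta_G(\beta,\sqrt{-1})$ and its matching‑polynomial form). Contracting $e=ij$ merges $i$ and $j$ and lowers their combined degree by $2$, and may create loops or parallel edges, so the degree‑dependent vertex weights do not transform term by term; reconciling the matchings of $G$ with those of $G\setminus e$ and $G/e$ requires a careful case split on which of $i,j$ are covered and a matching reorganisation of the uncovered‑vertex factors (or, in the involution approach, an involution compatible with the ``doubly‑chosen edge'' structure). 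Steps 1 and 3 are essentially bookkeeping once this identity is secured.
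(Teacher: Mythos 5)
The paper gives no proof of this theorem (``We omit the proof of theorem \ref{thmomega}''), so there is nothing to compare your argument against; I can only assess it on its own terms. Your overall route is sound and the steps you actually carry out are correct. The evaluation $f_n(2\sqrt{-1})=(1-n)(\sqrt{-1})^{\,n}$ follows from the recursion, and since $\sum_i d_i(s)=2|s|$ it yields $\theta_G(\beta,\sqrt{-1})=\sum_{s\subset E}(-\beta)^{|s|}\prod_i(1-d_i(s))$ as you state. The Sachs-type unfolding is also right: reading $\cdot\big|_{G\setminus C}$ as the principal submatrix of the full-graph matrix (so the diagonal carries the degrees in $G$, which is the reading forced by the regular-graph corollary), fixed points give $1+u^2(d_i-1)$, transpositions along edges give $-u^2$, oriented $k$-cycles give $-2u^k$, and the binary choice of assigning each cycle to the outer sum ($+2u^{|C'|}$) or to the determinant ($-2u^{|C'|}$) kills every configuration containing a cycle, leaving exactly your $\mu_G(u)=\sum_M(-u^2)^{|M|}\prod_{i\notin V(M)}(1+u^2(d_i-1))$. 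I also spot-checked the asserted identity $\omega_G(u^2)=\mu_G(u)$ on $P_2$, $P_3$, $K_3$, $C_4$ and the bouquets $B_L$, and it holds in each case.

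The genuine gap is the one you flag yourself: the bridge $\theta_G(u^2,\sqrt{-1})=(1-u^2)^{|E|-|V|}\mu_G(u)$, which in your reduction rests on the deletion--contraction identity $\mu_G=\mu_{G\setminus e}+u^2\,\mu_{G/e}$ for non-loop edges of multigraphs. As written, this is asserted and motivated but not proved, and it is the only step of real substance once the Sachs cancellation is in place. The identity does appear to be true (it checks on the examples above, including the case $K_3/e$ where parallel edges arise, and your V-function uniqueness argument via the recursion \eqref{thmomega1}, multiplicativity over components, and the bouquet values \eqref{thmomega2} is a legitimate way to conclude once it is established), but the case analysis over which of the endpoints of $e$ are covered by the matching, and the bookkeeping of how the degree-dependent weights $1+u^2(d_i-1)$ transform under deletion and contraction, must actually be written out. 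Until that lemma is proved, the argument is a correct and well-organized plan rather than a complete proof.
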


We omit the proof of theorem \ref{thmomega}.
The following corollary shows that $\omega_{G}$ is a matching polynomial
if $G$ is a regular graph.
\begin{cor}
If $G$ is a $(q+1)$-regular graph, then
\begin{equation}
\omega_{G}(u^{2})
=
\alpha_{G}(1/u +q u)u^{n}.
\end{equation}

\end{cor}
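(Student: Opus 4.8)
The plan is to specialize Theorem \ref{thmomega} to the regular case and then reduce the claim to a classical identity between the matching polynomial and the characteristic polynomial. First I would put $\mathcal{D}=(q+1)I$ into the formula of Theorem \ref{thmomega}. Since $\mathcal{D}-I=qI$, the matrix inside the determinant becomes $(1+qu^{2})I-u\mathcal{A}$, and its restriction to the principal minor indexed by $V\setminus V(C)$ is exactly $(1+qu^{2})I-u\mathcal{A}_{G\setminus C}$, where $\mathcal{A}_{G\setminus C}$ is the adjacency matrix of the induced subgraph $G\setminus C$. Pulling a factor $u$ out of each of the $n-|C|$ rows (here $|C|$ is both the number of edges and, since $C$ is $2$-regular, the number of vertices of $C$, so the powers of $u$ balance against the $u^{|C|}$ in the formula) gives
\begin{equation*}
\det\bigl[I+u^{2}(\mathcal{D}-I)-u\mathcal{A}\bigr]\big|_{G\setminus C}\;u^{|C|}=u^{n}\,\chi_{G\setminus C}\!\Bigl(\tfrac{1}{u}+qu\Bigr),
\end{equation*}
where $\chi_{H}(y):=\det(yI-\mathcal{A}_{H})$ is the characteristic polynomial and $\chi_{\emptyset}:=1$. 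Summing over $C$ (node-disjoint unions of cycles, including $C=\emptyset$) yields $\omega_{G}(u^{2})=u^{n}\sum_{C}2^{k(C)}\chi_{G\setminus C}(\tfrac1u+qu)$, so the corollary is equivalent to the graph-polynomial identity $\alpha_{G}(y)=\sum_{C}2^{k(C)}\chi_{G\setminus C}(y)$ for arbitrary $G$.

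To prove that identity I would use the Sachs coefficient expansion $\chi_{H}(y)=\sum_{L}w(L)\,y^{\,|V(H)|-|V(L)|}$, where $L$ runs over the linear subgraphs of $H$ (vertex-disjoint unions of edges and cycles) and $w(L)$ is the Sachs weight: $-1$ for each edge-component, $-2$ for each cycle-component. Expanding $\sum_{C}2^{k(C)}\chi_{G\setminus C}(y)$, each term corresponds to a pair $(C,L')$ with $C$ a union of cycles and $L'$ a linear subgraph of $G\setminus C$; setting $L:=C\sqcup L'$, this is the same as choosing a linear subgraph $L$ of $G$ together with a subset of its cycle-components to serve as $C$. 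For a fixed $L$ with $m$ cycle-components, the accumulated coefficient is therefore (up to the edge-component sign, which is $(-1)^{\#\text{edges of }L}$) equal to $\sum_{j=0}^{m}\binom{m}{j}2^{j}(-2)^{m-j}=(2+(-2))^{m}$, which vanishes as soon as $m\ge 1$. Hence only the matchings of $G$ ($m=0$) contribute, and they reassemble into $\alpha_{G}(y)$. Feeding this back and setting $y=\tfrac1u+qu$ gives $\omega_{G}(u^{2})=\alpha_{G}(\tfrac1u+qu)u^{n}$.

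The main obstacle is the identity $\alpha_{G}(y)=\sum_{C}2^{k(C)}\chi_{G\setminus C}(y)$ in the second paragraph; the binomial cancellation $(2+(-2))^{m}=0$ is the heart of it, but one must set up the Sachs expansion with the correct weights and check carefully that $L\mapsto(C,L')$ is a genuine bijection — in particular that the linear subgraphs of $G\setminus C$ are precisely the linear subgraphs of $G$ disjoint from $C$. The specialization in the first paragraph is routine determinant bookkeeping, the only delicate point being the equality $|V(C)|=|E(C)|$ for a $2$-regular $C$, which is what makes the exponents of $u$ collapse to $u^{n}$.
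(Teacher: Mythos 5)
Your proof is correct, and the first half of it is exactly the paper's argument: substitute $\mathcal{D}=(q+1)I$ into Theorem \ref{thmomega}, pull a factor $u$ out of each of the $n-|V(C)|$ remaining rows, and use $|E(C)|=|V(C)|$ for $2$-regular $C$ to collapse the exponents to $u^{n}$. Where you diverge is that the paper then simply cites the identity $\alpha_{G}(x)=\sum_{C}2^{k(C)}\det[xI-\mathcal{A}_{G\setminus C}]$ from Godsil and Gutman (equation (\ref{mg})) and stops, whereas you re-derive it via the Sachs coefficient expansion of the characteristic polynomial and the cancellation $\sum_{j}\binom{m}{j}2^{j}(-2)^{m-j}=(2-2)^{m}=0$ for any linear subgraph containing at least one cycle component. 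Your bijection $L\leftrightarrow(C,L')$ is sound: since the components of a linear subgraph are single edges or cycles, any union of cycles contained in $L$ is a union of its cycle components, and removing them leaves a linear subgraph vertex-disjoint from $C$, hence a linear subgraph of $G\setminus C$. The trade-off is transparency versus length: the paper's citation keeps the corollary to three lines, while your version is self-contained and makes visible why the cycle contributions of the matching polynomial are exactly absorbed by the sum over $C$ — which is arguably the real content of (\ref{mg}).
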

\begin{proof}
In \cite{Godsil}, it is shown that
\begin{equation}
\alpha_{G}
(x)
=
\sum_{C:cycles}
2^{k(C)}
\det
[x I
-
\mathcal{A}_{G \setminus C}
]. \label{mg}
\end{equation}
If $G$ is a $(q+1)$-regular graph, then $\mathcal{D}=(q+1)I$.
From theorem \ref{thmomega} and (\ref{mg}), the assertion follows.
\end{proof}

\section*{Acknowledgments}
This work was supported in part by Grant-in-Aid for JSPS Fellows
20-993 and Grant-in-Aid for Scientific Research (C) 19500249. 


\end{document}